%% file: main_arvix.tex
\documentclass[letterpaper, 10 pt, conference]{ieeeconf}  

\IEEEoverridecommandlockouts                              
\usepackage[utf8]{inputenc}
\usepackage[T1]{fontenc}
\usepackage{graphicx}
\graphicspath{ {./images/} }

\usepackage{sgame}
\usepackage{amssymb}
\usepackage{amsmath}
\usepackage{xcolor}
\usepackage{float}
\usepackage{comment}
\usepackage{lipsum}
\usepackage{mathbbol}
\usepackage[capitalize]{cleveref}
\usepackage{enumerate}
\usepackage{cite}
\usepackage{mathtools}

\newtheorem{remark}{Remark}
\newtheorem{proposition}{Proposition}
\newtheorem{theorem}{Theorem}

\newtheorem{definition}{Definition}

\newtheorem{lemma}{Lemma}

\DeclareMathOperator{\Equaldef}{\overset{def}{=}}
\input{symbols}

\renewcommand{\indic}[1]{\mathbb{1}(#1)}

\title{\LARGE \bf Strategic Delay and Coordination Efficiency in Global Games}

\author{Shinkyu Park, Behrouz Touri, and Marcos M.\ Vasconcelos   
\thanks{The authors contributed equally to this work and are listed in alphabetical order by their last names.}
\thanks{S.\ Park is with 
Electrical and Computer Engineering, King Abdullah University of Science and Technology, Thuwal, 23955, Saudi Arabia, B. Touri is with the Department of Industrial and Systems Engineering, University of Illinois Urbana Champaign, Champaign, IL 61820, and M.\ M.\ Vasconcelos are with the Department of Electrical and Computer Engineering, FAMU-FSU College of Engineering, Florida State University,  Tallahassee, FL 32306, USA. E-mails:
        \{\tt  shinkyu.park@kaust.edu.sa, touri1@illinois.edu, m.vasconcelos@fsu.edu\}.}%
}

\begin{document}

\maketitle
\thispagestyle{empty}
\pagestyle{empty}

\begin{abstract}

We investigate a coordination model for a two-stage collective decision-making problem within the framework of global games. The agents observe noisy signals of a shared random variable, referred to as the \textit{fundamental}, which determines the underlying payoff. Based on these signals, the agents decide whether to participate in a collective action now or to delay. An agent who delays acquires additional information by observing the identities of agents who have chosen to participate in the first stage. This informational advantage, however, comes at the cost of a discounted payoff if coordination ultimately succeeds. Within this  decision-making framework, we analyze how the option to delay can enhance collective outcomes. We show that this intertemporal trade-off between information acquisition and payoff reduction can improve  coordination and increase the efficiency of collective decision-making.

\end{abstract}


\section{Introduction}

Consider a collection of agents sharing a common environment. A task with a random difficulty is introduced, and the agents must coordinate to complete it. However, the task's difficulty (defined as the number of agents required to complete the task and obtain a positive reward) is imperfectly observed through noisy channels. The agents face a two-stage decision process regarding whether to participate on the task or not. In the first stage, agents may take a risky action and receive an undiscounted payoff. The agents who choose to delay their decision to the second stage receive an additional information signal, observing the set of agents who took the risky action in the first stage, but at the cost of a discounted payoff. From the perspective of a system designer, we seek to determine whether incorporating this second stage is advantageous in equilibrium or not. Addressing this requires evaluating a nontrivial trade-off between the initial noise level in the private signals about the task difficulty, the informational gain achieved by delaying the decision, and the penalty of the discounted payoff. In this paper, we formalize this trade-off to answer this question.

\begin{figure}[t!]
    \centering
\includegraphics[width=0.9\columnwidth]
{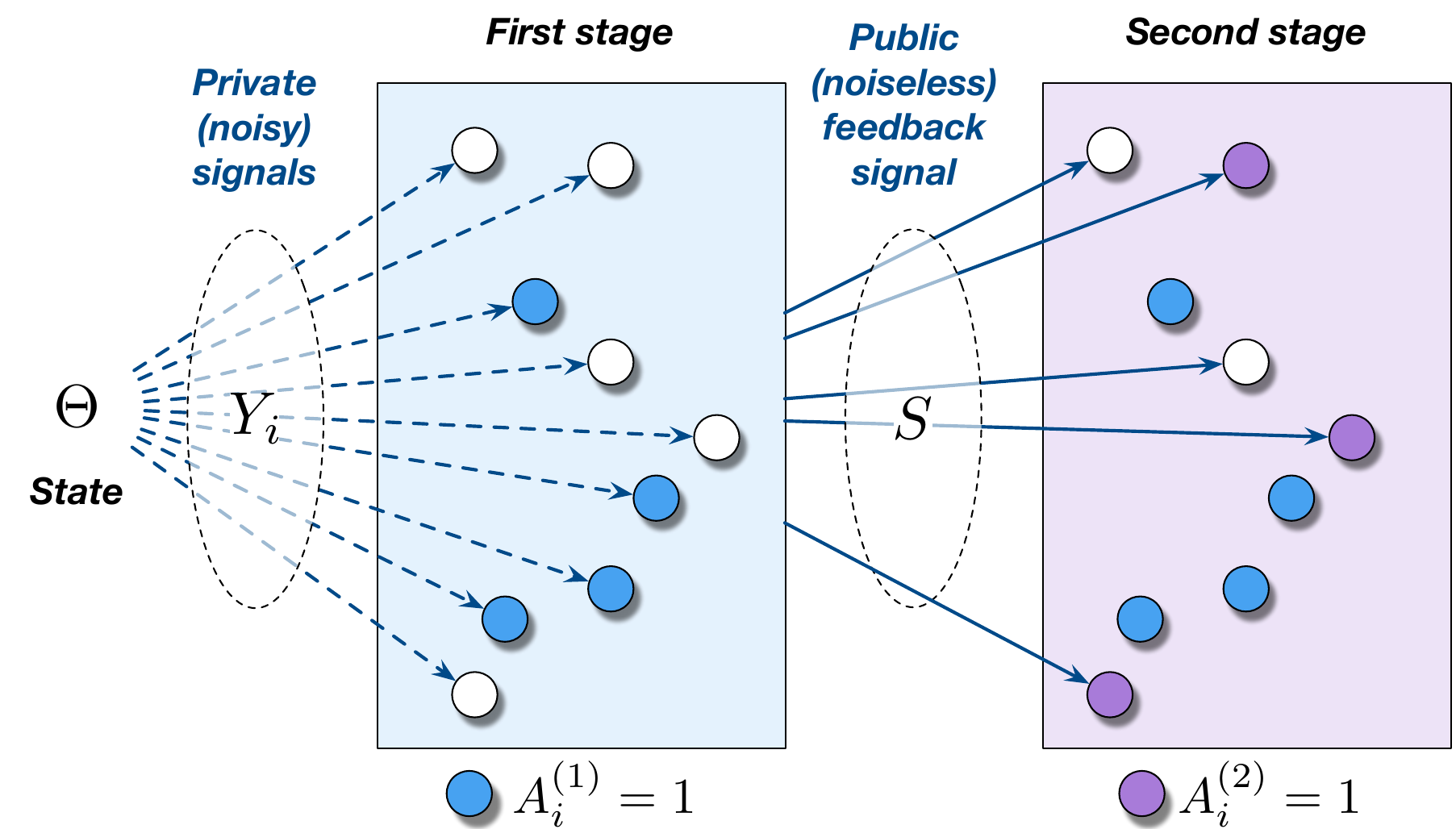}
    \caption{Two-stage global game with public (noiseless) feedback signal. In the first stage, the blue nodes decided to take the risky action, and the white nodes delayed their decision to the second stage. In the second stage the purple nodes decided to take the risky action. The payoff is collected at the end of the game.}
    \label{fig:two_stage}
\end{figure}

The applications of this decision-making framework are broad, spanning engineering \cite{Krishnamurthy:2009,Kanakia:2016,beaver2025multi,Wei:2023}, economics \cite{jin2023coordination}, biology \cite{vasconcelos2025multi}, and political science \cite{AngeletosHellwigPavan2007}. Specific examples include distributed task allocation in multi-robot systems, speculative currency attacks, bank runs, risky investments, the disclosure of private preferences, microbial infections, and political revolutions \cite{Mahdavifar:2017,dahleh2016coordination,Vasconcelos:2023}. Traditionally, these problems are modeled as \textit{global games}. This literature was initiated by the seminal work of \cite{Carlsson:1993} and further developed and popularized by \cite{Morris:2003}. 

The vast majority of this literature focuses on single-stage games. Due to the intrinsic complexity of even the most canonical problem formulations, the analysis of multi-stage global games has remained limited. The work of Dasgupta~\cite{Dasgupta2007CoordinationDelayGlobalGames} is among the first to study the option to delay in global games, characterizing coordination regimes in a continuum of agents with binary payoffs. Notably, \cite{Dasgupta2007CoordinationDelayGlobalGames} relies on the assumption of vanishing noise, which serves as an analytical device to ensure desirable properties such as equilibrium uniqueness. In contrast, our approach focuses on the non-vanishing noise regime, although the noise level may still be sufficiently small. In particular, we are interested in scenarios where the noise level could also be  manipulated. For example, when an incumbent regime manipulates information to induce instability and hinder coordination within a population. Another arises when a cyber-attacker seeks to disrupt the operations of a swarm robotic system cooperating to fulfill tasks in a smart warehouse.

Building on the setup introduced by \cite{Dasgupta2007CoordinationDelayGlobalGames}, we consider a payoff structure that is more closely aligned with the growing engineering and operations research literature on global games. Specifically, the payoff increases linearly with the number of agents taking the risky action and decreases linearly with the task difficulty (i.e., the fundamental). We assume Gaussian prior and noise distributions. In contrast to much of the economics literature, we do not impose degenerate priors or vanishing noise. Instead, we characterize equilibria for given model parameters, allowing for a more direct connection to practical settings. 

We begin with a finite population and show that the introduction of a delay option facilitates coordination from the perspective of a mechanism designer. We then analyze the infinite-population limit, where we first identify noise regimes under which the equilibrium in the class of threshold policies is unique, and subsequently derive comparative statics with respect to the discount factor. Finally, by introducing a notion of coordination efficiency, we demonstrate that there exist regimes in which adding a second stage improves equilibrium efficiency and others when it does not. 

The remainder of the paper is organized as follows. Section~II introduces the model, and Section~III studies equilibrium properties in the finite-agent regime. Section~IV analyzes the infinite-agent regime, where equilibrium reduces to a single threshold parameter and admits tractable analysis; comparative statics are also presented. Section~V introduces coordination efficiency, and Section~VI concludes the paper.

\section{System Model}

Our framework consists of a two-stage stochastic game with partial information. In this section, we introduce the key components of the game and the associated solution concept.

\subsection{Agents and actions}
Consider a game with a set of agents $\ags \Equaldef \{1, \ldots, N\}$. We study both the finite-population case ($N < \infty$) and the infinite-population limit ($N \to \infty$), in which case $\ags = \mathbb{N}$. The decision process unfolds over two stages, indexed by $t \in \{1,2\}$. At each stage, each agent $i \in \ags$ selects an action from the binary set $\mathcal{A}_i\Equaldef\{0,1\}$. Following the convention in the global games literature, action $1$ is referred to as the \emph{risky}, while action $0$ is the \emph{safe}. The action of the $i$-th agent at time $t$ is denoted by $A_i^{(t)}$.

The agents are allowed to take the risky action at most once across the two stages. This imposes the following constraint 
$A_i^{(1)} + A_i^{(2)} \leq 1, \, i \in \ags$ on the action sequence.
We let $F_i\Equaldef A_i^{(1)} + A_i^{(2)}$ to be the aggregate action of the player over the two stages.

\subsection{Information structure and feedback mechanism}
The environment is characterized by an underlying fundamental state $\Theta$, drawn from a standard normal distribution, i.e., $\Theta \sim \mathcal{N}(0,1)$. The agents do not observe $\Theta$ directly. Instead, prior to the first decision stage, each agent $i \in \ags$ receives a private signal $Y_i$, which is a noisy observation of the fundamental. More precisely, we assume 
\begin{align} \label{eq:noisy_observation}
Y_i = \Theta + Z_i, \quad i \in \ags,
\end{align}
where the idiosyncratic noise $Z_i$ is independently distributed across agents according to $Z_i \sim \mathcal{N}(0,\sigma^2)$.

To capture the dynamics of delayed decision-making, we introduce a \textit{public feedback mechanism}. At the conclusion of the first stage $t=1$,  all agents observe the proportion of early adopters who chose to take the risky action\footnote{In this work, our focus will be on symmetric policies and due to the anonymous utility functions, this quantity can be shown to be the sufficient statistics for decision-making.} This is illustrated in \cref{fig:two_stage}. The noiseless public feedback signal is 
\begin{align}
S \Equaldef \frac{1}{N} \sum_{i=1}^N A_i^{(1)}.
\end{align}

\subsection{Policies, utility, and Bayesian Nash Equilibrium}

The agents use policies on their observations to determine their actions. In the first stage, agent $i$'s policy relies exclusively on its private signal
\begin{align}
    A_i^{(1)} = \pi_i^{(1)}(Y_i).
\end{align}

In the second stage, the policy maps the initial private signal and the public feedback on first-stage actions to a final decision
\begin{align}
A_i^{(2)} = \pi_i^{(2)}(Y_i, S).
\end{align}

Denote the action of the $i$-th agent during the game by $A_i \Equaldef (A_i^{(1)}, A_i^{(2)})$, 
the action profile of all other agents $A_{-i} = (A_1, \ldots, A_{i-1}, A_{i+1}, \ldots A_N)$, and the fundamental $\Theta$, the utility function of agent $i\in \ags$ is defined as

\begin{multline}
u_i(A_i, A_{-i}, \Theta) \\ \Equaldef \big(A_i^{(1)} + \gamma A_i^{(2)}\big) \Bigg( \frac{1}{N}\sum_{j=1}^N \big(A_j^{(1)} + A_j^{(2)}\big) - \Theta \Bigg),
\end{multline}
where $\gamma\in(0,1)$ denotes the problem's discount factor.

\vspace{5pt}

\begin{remark}
The utility function is increasing in the aggregate of agents that take the risky action, and decreasing in the fundamental. Therefore, it has \textit{strategic complementarity} \cite{dahleh2016coordination}. Moreover, it is linear in the action profile.
\end{remark}

\vspace{5pt}

Denote the policy profile of the $i$-th agent during the game by $\pi_i = (\pi_i^{(1)}, \pi_i^{(2)})$ and the policy profile $\pi_{-i} = (\pi_1, \ldots, \pi_{i-1}, \pi_{i+1}, \ldots, \pi_N)$ adopted by all other agents. Let $\mathcal U_i$ denote the expected utility of agent~$i$, defined as
\begin{align}
    \mathcal U_i \big(\pi_i, \pi_{-i} \big) \Equaldef \mathbb E \big[ u_i (A_i, A_{-i}, \Theta) \big],
\end{align}
where
$A_i^{(1)} = \pi_i^{(1)}(Y_i)$ and $A_i^{(2)} = \pi_i^{(2)}(Y_i, S)$,
and $A_j^{(1)} = {\pi_j}^{(1)}(Y_j)$ and $A_j^{(2)} = {\pi_j}^{(2)}(Y_j, S)$ for ${j \neq i}$.

\vspace{5pt}

\begin{definition} \label{def:bayesian_nash_equilibrium}
A homogeneous policy profile $\pi^\star$ is a \emph{Bayesian Nash Equilibrium (BNE)} if
\begin{align}
\mathcal U_i \big(\pi^\star, \pi_{-i}^\star \big) 
\;\ge\; 
\mathcal U_i \big(\pi, \pi_{-i}^\star \big), 
\quad  i \in [N],
\end{align}
for any admissible policy $\pi$, where $\pi_{-i}^\star$ denotes the profile in which all agents other than $i$ adopt the policy $\pi^\star$.
\end{definition}

\section{Preliminaries on Finite-Population Model}
We focus on policies of the following form.

\vspace{5pt}

\begin{figure*}[t!]
\small
\begin{multline}\label{eq:first_stage_average_utility}
\mathbb E \!\left[
\frac{1}{N}
\Bigg(
\sum_{j \neq i}
\indic{Y_j \le \max\{\tau^\star, \lambda^\star(S)\}}
+ 1
\Bigg)
\!-\! \Theta
\, \Bigg|\, 
Y_i \!=\! y_i
\right]
\\ \ge 
\gamma
\mathbb E \Bigg[
\indic{Y_i \le \lambda^\star(S)}
\left(
\frac{1}{N}
\left(
\sum_{j \neq i}
\indic{Y_j \!\le\! \max\{\tau^\star, \lambda^\star(S)\}} \right)
\!+\! 1
\right)
\!-\! \Theta
\Bigg| \,
Y_i \!=\! y_i
\Bigg],
\end{multline}
\end{figure*}

\begin{definition}
The policies of \emph{homogeneous threshold type} are defined as
\begin{subequations}
\begin{align}
    \pi_i^{(1)} (Y_i) &\Equaldef  \indic{Y_i \leq \tau} \\
    \pi_i^{(2)} (Y_i, S) &\Equaldef  \indic{Y_i \leq \lambda(S)},
\end{align}
\end{subequations}
where $\mathbb{1}(\cdot)$ denotes the indicator function, $\tau$ is the first-stage threshold, and $\lambda(S)$ is the second-stage threshold, which depends on the first-stage participation $S$.
\end{definition}

\vspace{5pt}

\begin{lemma}\label{lemma:bne_policy}
The BNE  policy profile of homogeneous threshold type is characterized by two-stage thresholds $(\tau^\star, \lambda^\star)$, determined as follows\footnote{The existence of BNE policies of homogeneous threshold type is currently under investigation in a separate paper.}.

\paragraph{First stage} Agent $i$ takes the risky action in the first stage, i.e., $Y_i \le \tau^\star$, if and only if the conditional expected utility from acting in the first stage is no less than that from delaying to the second stage according to \eqref{eq:first_stage_average_utility}.

\paragraph{Second stage}
Agent $i$ who delayed the decision to the second stage chooses to take the risky action, i.e., $Y_i \le \lambda^\star(S)$, if and only if the conditional expected utility from acting is nonnegative:
\begin{multline}\label{eq:second_stage_average_utility}
\Bigg(1 - s - \frac{1}{N}\Bigg)
\mathbb{P} \Big(Y_j \le  \lambda^\star(s) \ \Big| \  Y_j > \tau^\star,\ Y_i = y_i,\ S = s\Big)  \\ +
\frac{1}{N} + s - \mathbb{E}\Big[\Theta \ \Big|\ Y_i = y_i,\ S = s\Big] \ge 0,
\end{multline}

\end{lemma}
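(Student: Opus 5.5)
This is a backward-induction argument. Fix agent $i$ and suppose every other agent $j\neq i$ uses the homogeneous threshold policy $(\tau^\star,\lambda^\star)$. We compute agent $i$'s best response, first in the second stage and then in the first. We then impose that this best response coincides with $(\tau^\star,\lambda^\star)$ itself, which is exactly the BNE condition of \cref{def:bayesian_nash_equilibrium} restricted to homogeneous threshold profiles. Because the utility $u_i$ is linear in $A_i^{(1)}$ and $A_i^{(2)}$, each stage reduces to comparing two conditional expectations. Throughout, agent $i$ is treated as one of the $N$ agents contributing to $S$, so $i$'s own action enters the average with weight $1/N$.

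\emph{Second stage.} Consider an agent who delayed, i.e.\ $A_i^{(1)}=0$ (so $Y_i>\tau^\star$), and who has observed $S=s$. Choosing $A_i^{(2)}=0$ yields utility $0$. Choosing $A_i^{(2)}=1$ yields
\[
\gamma\Bigl(s+\tfrac1N+\tfrac1N\textstyle\sum_{j\neq i}\indic{Y_j>\tau^\star,\,Y_j\le\lambda^\star(s)}-\Theta\Bigr),
\]
where the first-stage participants are already counted in $s$. Since $\gamma>0$, acting is optimal if and only if the conditional expectation of the bracket, given $Y_i=y_i$ and $S=s$, is nonnegative.

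The number of first-stage participants among $j\neq i$ is $Ns$, because $i$ itself did not act. The remaining $N-1-Ns=N(1-s-\tfrac1N)$ agents are the delayers, for whom $Y_j>\tau^\star$ is known. By exchangeability of the $Y_j$ given $\Theta$, each delayer acts with the same conditional probability $\mathbb P(Y_j\le\lambda^\star(s)\mid Y_j>\tau^\star,Y_i=y_i,S=s)$. Substituting this into the expectation of the bracket gives exactly \eqref{eq:second_stage_average_utility}. The one subtle point is the symmetry step: conditioning on $S=s$ reveals only how many of the other agents lie below $\tau^\star$, not which ones. Every $j\neq i$ is therefore equally likely to be a delayer, and conditioning on $Y_j>\tau^\star$ is legitimate.

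\emph{First stage.} Now condition on $Y_i=y_i$.
\begin{itemize}
\item If $i$ acts now, its utility is $\tfrac1N\bigl(1+\sum_{j\neq i}F_j\bigr)-\Theta$. The others, following $(\tau^\star,\lambda^\star)$, satisfy $F_j=\indic{Y_j\le\max\{\tau^\star,\lambda^\star(S)\}}$, which gives the left-hand side of \eqref{eq:first_stage_average_utility}.
\item If $i$ delays, it will follow the optimal second-stage rule derived above, acting exactly when $Y_i\le\lambda^\star(S)$. Its utility is then $\gamma\,\indic{Y_i\le\lambda^\star(S)}(\cdots)$, which gives the right-hand side, with $F_j$ written in the same form.
\end{itemize}
Agent $i$ acts in the first stage if and only if the left-hand side is at least the right-hand side. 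Consistency with the threshold form then requires this set of $y_i$ to equal $\{y_i\le\tau^\star\}$.

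The main obstacle is notational rather than conceptual. The displays must correctly account for agent $i$'s own influence on $S$: under delay $S$ excludes $i$, whereas under acting $S$ includes $i$. The compact expression in \eqref{eq:first_stage_average_utility} must be checked against this accounting. I would handle it by writing $S=\tfrac1N\bigl(A_i^{(1)}+\sum_{j\neq i}A_j^{(1)}\bigr)$ explicitly in each branch and verifying term by term. I would not attempt to prove that the resulting set is an interval (a monotonicity argument); consistent with the footnote, the lemma only characterizes the equilibrium thresholds, and existence is deferred.
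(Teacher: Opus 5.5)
Your proposal is correct and follows the same backward-induction route as the paper. In the second stage you compare acting against the zero payoff of the safe action, which yields \eqref{eq:second_stage_average_utility}; in the first stage you compare acting now against the continuation value of delaying, which yields \eqref{eq:first_stage_average_utility}. Your exchangeability argument for the factor $(1-s-\frac{1}{N})\,\mathbb{P}(\cdot)$ and your explicit tracking of agent $i$'s contribution to $S$ in the two first-stage branches fill in details that the paper's short proof leaves implicit.
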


\vspace{5pt}

\begin{proof}
Recall that, under the threshold-type policies, the total action taken by agent $i$ across the two stages satisfies
$A_i^{(1)} + A_i^{(2)}
=
\indic{Y_i \le \max\{\tau^\star, \lambda^\star(S)\}}$. 
We begin by analyzing the second stage.
Under a homogeneous policy profile characterized by $(\tau^\star,\lambda^\star)$, agent~$i$ takes the risky action if and only if the conditional expected utility of acting is greater than or equal to that of not acting, which yields zero utility, leading to \eqref{eq:second_stage}, where $\mathcal S$ denotes the set of agents who take action in the first stage, $\mathcal S = \{\, i \in [N] \mid A_i^{(1)} = 1 \,\}$.
This condition leads to \eqref{eq:second_stage_average_utility}.

\begin{figure*}
\small
\begin{multline}\label{eq:second_stage}
\frac{1}{N}
\left(
\sum_{j \notin \mathcal{S} \cup \{i\}}
\mathbb{E}\!\Big[
\indic{Y_j \le \max\{\tau^\star, \lambda^\star(S)\}}
\ \Big| \ Y_j > \tau^\star, Y_i = y_i, S = s
\Big] + 1
\right) 
+ s - \mathbb{E}\Big[\Theta \ \Big| \ Y_i = y_i, S = s\Big] \ge 0
\end{multline}
\end{figure*}

Similarly, for the first stage, the agent compares the conditional expected utility of acting immediately with that of delaying the decision to the second stage. Acting immediately yields a higher expected utility if and only if \eqref{eq:first_stage_average_utility} holds.
\end{proof}

\vspace{5pt}

\begin{theorem} \label{theorem:coordination_improvement}
    The option to delay facilitates participation
    when $\gamma \in (0, \bar{\gamma})$ for some $\bar{\gamma} > 0$. In particular,
\begin{equation}
\sum_{i=1}^N \indic{Y_i \le \tau_{\text{single}}^\star}
\;\le\;
\sum_{i=1}^N \indic{Y_i \le \max\{\tau^\star, \lambda^\star(s)\}}
\end{equation}
for all $s=k/N$, $k\in[N-1]$.
\end{theorem}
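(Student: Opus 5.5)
The inequality holds pointwise in the signals as soon as $\max\{\tau^\star,\lambda^\star(s)\}\ge \tau_{\text{single}}^\star$ for every $s=k/N$, $k\in[N-1]$. Indeed, $y\mapsto \indic{y\le c}$ is nondecreasing in the threshold $c$, so summing over $i$ gives the claim. The plan is therefore to establish the threshold comparison. Since $\tau^\star$ does not depend on $s$, it suffices to show, for $\gamma$ small, \emph{either} $\tau^\star\ge\tau_{\text{single}}^\star$ \emph{or} $\lambda^\star(s)\ge \tau_{\text{single}}^\star$ for each $s$. I would aim for the second, since the second stage is where delay adds information. The first-stage threshold will be used only through the fact that the first-stage participants number $k=Ns$.

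\textbf{Step 1 (single-stage benchmark).} Write the single-stage indifference condition defining $\tau_{\text{single}}^\star$:
\begin{equation*}
\frac{1}{N}\Big(1+(N-1)\,\mathbb P(Y_j\le \tau \mid Y_i=\tau)\Big)=\mathbb E[\Theta\mid Y_i=\tau]=\frac{\tau}{1+\sigma^2}.
\end{equation*}
The left side is bounded by $1$, so $\tau_{\text{single}}^\star\le 1+\sigma^2$. More usefully, the left side lies in $[1/N,1]$.

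\textbf{Step 2 (second-stage threshold).} Use \eqref{eq:second_stage_average_utility} of Lemma~\ref{lemma:bne_policy}. The first term is nonnegative, so a sufficient condition for acting at $y_i$ is
\begin{equation*}
\tfrac1N+s\ge \mathbb E[\Theta\mid Y_i=y_i,S=s].
\end{equation*}
I would then argue that $\lambda^\star(s)$ is at least the $y$ solving this equality, and compare that value with $\tau_{\text{single}}^\star$. The key is that conditioning on $S=s$ with $s\ge 1/N$ reveals that at least $k$ agents had signals below $\tau^\star$. This shifts the posterior of $\Theta$ downward relative to the single-stage posterior, which only knows $Y_i$.

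\textbf{Step 3 (role of small $\gamma$).} For $\gamma\to0$, the right side of \eqref{eq:first_stage_average_utility} vanishes. The first-stage condition then becomes close to the single-stage condition, but with the inflated participation $\max\{\tau^\star,\lambda^\star(S)\}$ in the sum. By strategic complementarity and monotonicity of the left side in the others' thresholds, this gives $\tau^\star\ge \tau_{\text{single}}^\star-\epsilon(\gamma)$. The bound should become exact for $\gamma$ below some $\bar\gamma>0$. I would prove this through a continuity/implicit-function argument on the indifference equation at $y_i=\tau^\star$, using the fact that the derivative in $\tau$ is nonzero, which is the uniqueness regime assumed implicitly.

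\textbf{Main obstacle.} The hard part is Step 3: making the comparison $\tau^\star\ge\tau_{\text{single}}^\star$, or the combined max inequality, hold \emph{uniformly} in $s$ with a strict margin, so that a positive $\bar\gamma$ exists. The posterior $\mathbb E[\Theta\mid Y_i,S]$ involves order-statistic conditioning and is awkward. I would first try to bypass it by working directly with the first-stage indifference. With $\gamma$ small, delaying is worth at most $\gamma$ times a bounded expected payoff (bounded because $\Theta$ is Gaussian). Acting at $y=\tau_{\text{single}}^\star$ gives payoff at least the single-stage payoff at that point, which equals $0$. I would then exploit the fact that others' participation weakly exceeds the single-stage participation, which yields a strict inequality by a Gaussian tail margin. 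Choosing $\bar\gamma$ to dominate this margin would complete the argument.
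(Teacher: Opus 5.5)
Your working argument, the one in your ``Main obstacle'' paragraph, is essentially the paper's proof. The paper also reduces to $\tau^\star\ge\tau_{\text{single}}^\star$ via the max. It then shows that the first-stage gain from acting rather than delaying is strictly positive at the single-stage threshold. The acting payoff dominates the single-stage payoff, since others participate whenever $Y_j\le\max\{\tau_{\text{single}}^\star,\lambda^\star(S)\}$, while the delay term is only $O(\gamma)$. Evaluating at the single point $y=\tau_{\text{single}}^\star$ is a cleaner version of the paper's three-region split over all $y_i\le\tau_{\text{single}}^\star$. It needs a finite bound on the delay payoff at one point only, rather than a $\sup_{y_i}$ over a region.

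Drop Step 2, though, because it is both unnecessary and false. Conditioning on $S=k/N$ reveals exactly $k$ signals below $\tau^\star$ and $N-k$ above it, agent $i$'s own included. So for small $k$ the posterior of $\Theta$ moves \emph{up}, not down, and $\lambda^\star(s)$ can be very low. In the large-$N$ limit of Proposition~\ref{prop:infinite_population} it is $-\infty$ whenever the revealed $\theta$ exceeds $1$. No bound uniform in $s$ can come from the second stage, which is why the paper never bounds $\lambda^\star$ and works only with the first stage.

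Three points to tighten in your bypass:
\begin{enumerate}
\item Assert the dominance of others' participation at the hypothetical profile where others use $\tau_{\text{single}}^\star$ in stage one, as the paper does with $\Delta^N$. At the equilibrium profile, that dominance is exactly what you are trying to prove, so using it there is circular.
\item Passing from a positive gain at $\tau_{\text{single}}^\star$ to $\tau^\star>\tau_{\text{single}}^\star$ uses single crossing of the indifference function, i.e.\ the uniqueness you already invoke.
\item Your strict margin needs the second-stage threshold $\lambda$ (the response to first-stage threshold $\tau_{\text{single}}^\star$) to satisfy $\lambda(s)>\tau_{\text{single}}^\star$ for some attainable $s$. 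The choice $s=(N-1)/N$ works. There the lone delayed agent's payoff is $1-\mathbb{E}[\Theta\mid Y_j=y,\,S=(N-1)/N]$. Conditioning on the other signals lying below the threshold only lowers the posterior mean. So as $y\downarrow\tau_{\text{single}}^\star$ this posterior mean is at most $\mathbb{E}[\Theta\mid Y_j=\tau_{\text{single}}^\star]$, which is $<1$ by the single-stage indifference condition for $N\ge 2$.
\end{enumerate}
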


\vspace{5pt}

\begin{proof}
The proof is in Appendix~\ref{proof_thm1}.
\end{proof}

\vspace{5pt}

The above theorem shows that the option to delay can induce a larger number of agents to take the risky action. However, it is not immediate whether this increase in coordination translates into higher expected utility. To characterize when such improvements arise, we henceforth focus on the infinite-population two-stage global game, which admits a tractable characterization of equilibrium behavior. The following proposition establishes a link between the finite- and infinite-population settings.

\vspace{5pt}

\begin{proposition} \label{prop:infinite_population}
Fix a first-stage threshold $\tau$ and let ${\lambda_N^{\star}}$ denote the BNE second-stage policy
in a population of size $N$ (assuming it exists and it is unique). Suppose the limit $\lambda^\star (s) = \lim_{N \to \infty} \lambda_N^{\star} (s_N)$ is well-defined, where $s_N$ denotes the realization of the portion of agents acting in the first stage and satisfies $\lim_{N \to \infty} s_N = s$. In the population limit, the limiting policy profile $(\tau,\lambda^\star)$ is characterized as follows:

\begin{enumerate}[(a)]
\item The second-stage policy satisfies\footnote{Technically, $(\theta,\tau, \sigma,s)$ can be such that the inequality is exact. In that case, the lower limit and the upper limit to $s$ is not the same. However, this situation happens with probability zero.}:
\begin{align}
\textstyle\lambda^\star (s)
=
\begin{cases}
+\infty, & \text{if } \theta = \tau - \sigma \Phi^{-1}(s) \le 1 \\
-\infty, & \text{otherwise},
\end{cases}
\end{align}
where $\theta$ denotes the realization of the fundamental $\Theta$.

\item In equilibrium, agent~$i$ chooses to take the risky action in the first stage, i.e., $A_i^{(1)} = 1$, if and only if 
\begin{multline}
\mathbb{E}\Big[F(\Theta, S) - \Theta \mid Y_i = y_i\Big] \\
\shoveleft{\geq \gamma\, \mathbb{E}\Big[\indic{\Theta \le 1}\,\big(F(\Theta, S) - \Theta\big) \mid Y_i = y_i\Big],}
\end{multline}
where 
\begin{equation}
F(\Theta, S) \Equaldef \Phi\left(\frac{\max\{\tau, \lambda^\star(S)\} - \Theta}{\sigma}\right).
\end{equation}
\end{enumerate}
\end{proposition}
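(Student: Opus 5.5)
Starting from the finite-population equilibrium conditions of \cref{lemma:bne_policy}, I will pass to the limit $N\to\infty$ term by term. Fix $\tau$.

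\textbf{Step 1: $S$ reveals $\Theta$.} Conditional on $\Theta=\theta$, the first-stage actions $\indic{Y_j\le\tau}$ are i.i.d.\ Bernoulli with mean $\Phi((\tau-\theta)/\sigma)$. By the strong law of large numbers, $S_N\to\Phi((\tau-\theta)/\sigma)$ almost surely. This map is strictly decreasing in $\theta$, so it is invertible. Hence a limiting realization $s$ identifies the fundamental exactly:
\[
\theta=\tau-\sigma\Phi^{-1}(s).
\]
It follows that $\mathbb E[\Theta\mid Y_i=y_i,S_N=s_N]\to\theta$ for any $s_N\to s$, because the posterior given $S_N$ concentrates. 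I would justify this concentration with a Bernstein--von Mises/Laplace-type argument on the binomial likelihood. The private signal $y_i$ becomes irrelevant in the limit.

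\textbf{Step 2: second-stage threshold (part (a)).} Take the limit in \eqref{eq:second_stage_average_utility}. The term $1/N$ vanishes. The conditional probability $\mathbb P(Y_j\le\lambda^\star_N(s_N)\mid Y_j>\tau^\star,Y_i,S_N)$ converges to $q\in[0,1]$, the fraction of remaining agents who act. The condition therefore becomes
\[
(1-s)q+s-\theta\ge 0.
\]
Because the late agents all know $\theta$, the second stage is a complete-information coordination problem with a unique threshold equilibrium, which I will identify as follows.

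If $\theta\le 1$, suppose $\lambda^\star=+\infty$. Then $q=1$, and the payoff of acting is $1-\theta\ge0$, so acting is optimal and the conjecture is consistent. If $\theta>1$, even $q=1$ gives a negative payoff, so $\lambda^\star=-\infty$.

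The delicate point is uniqueness in the region $s<\theta\le 1$. Here $\lambda^\star=-\infty$ would give $q=0$ and a payoff of $s-\theta<0$, which is also self-consistent. I would rule this out using the hypothesis that $\lambda^\star_N$ is the unique BNE and that its limit exists. More concretely, I would argue that for finite $N$ the condition is monotone in $y_i$, with the residual noise favoring low-signal agents. This pins down the cutoff, and in the limit it selects $\lambda=+\infty$ whenever $\theta<1$. This selection step is the main obstacle. The fallback is to state (a) as the characterization of the payoff-dominant limit consistent with the assumed convergence. The boundary case $\theta=1$ has probability zero, as the footnote notes.

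\textbf{Step 3: first-stage condition (part (b)).} Substitute $\lambda^\star$ from Step 2 into \eqref{eq:first_stage_average_utility}. The left-hand average $\frac1N\big(\sum_{j\ne i}\indic{Y_j\le\max\{\tau,\lambda^\star(S)\}}+1\big)$ converges almost surely, conditional on $\Theta$, to $F(\Theta,S)$. The right-hand indicator $\indic{Y_i\le\lambda^\star(S)}$ becomes $\indic{\Theta\le1}$ by part (a). All quantities are bounded by $1+|\Theta|$, which is integrable given $Y_i=y_i$ under the Gaussian model. Dominated convergence therefore lets me exchange the limit and the conditional expectation, which yields the stated inequality.
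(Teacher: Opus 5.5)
Your route is the paper's. You first show that $S$ reveals $\theta=\tau-\sigma\Phi^{-1}(s)$; the paper phrases this via the KL exponent of the binomial likelihood, where you use Laplace or Bernstein--von Mises. You then pass to the limit in the second-stage inequality to get $(1-s)q+s-\theta\ge 0$. Finally, you substitute $\indic{Y_i\le\lambda^\star(S)}\to\indic{\Theta\le 1}$ into the first-stage comparison. Your dominated-convergence justification in Step~3 is more careful than the paper's.

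The step you single out as the main obstacle is a genuine gap, but the paper does not close it either. Its appendix simply asserts that the limiting inequality ``yields'' $\lambda^\star=+\infty$ for $\theta\le 1$. The main text obtains this by explicitly assuming agents coordinate on the Pareto-dominant equilibrium.

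Your proposed fix via monotonicity in $y_i$ cannot work. Monotonicity tells you the best response has threshold form, not where the threshold sits. In the limit the condition no longer depends on $y_i$ at all, and on $s<\theta<1$ three candidates satisfy it:
\begin{itemize}
\item $\lambda=+\infty$;
\item $\lambda=-\infty$;
\item the interior value $\lambda=\theta+\sigma\Phi^{-1}(\theta)$, at which $\Phi((\lambda-\theta)/\sigma)=\theta$ and every remaining agent is exactly indifferent.
\end{itemize}
For large finite $N$, the public statistic $S_N$ carries information about $\Theta$ that grows linearly in $N$, while private precision stays at $1/\sigma^2$. That is the regime where the usual global-game uniqueness argument fails. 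So neither finite-$N$ monotonicity nor the assumed uniqueness of $\lambda_N^\star$ tells you which candidate the limit selects.

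Take your fallback: state (a) as the payoff-dominant selection, which is exactly what the paper relies on. Also delete the sentence claiming the complete-information second stage has a unique threshold equilibrium, since it contradicts the multiplicity you correctly identify right after.
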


\vspace{5pt}

\begin{proof}
The proof is in Appendix~\ref{proof:prop:infinite_population}.
\end{proof}

\section{Infinite-Population Model}

The equilibrium analysis becomes significantly more tractable in the limit of a large number of agents.

\subsection{First-stage threshold policy}
We assume that the agents utilize a homogeneous threshold policy in the first stage. That is, agent $i$ takes the risky action early if and only if their private signal $Y_i$ is below a common critical threshold $\tau$, i.e., $A_i^{(1)} = \indic{Y_i \leq \tau}.$

Conditioned on the realization of the fundamental state $\Theta = \theta$, the fraction of agents who take the risky action in the first stage is
given by
\begin{equation}
s = \lim_{N\rightarrow \infty}\frac{1}{N} \sum_{i=1}^N \indic{Z_i \leq \tau - \theta} \overset{(a)}{=} \mathbb{P} \big(Z_i \leq \tau -\theta \big), 
\end{equation}
where $Z_i \sim \mathcal{N}(0,\sigma^2)$ and $(a)$ follows from the strong law of large numbers for exchangeable processes (see e.g.\ Example 1, Chapter 7.3., \cite{chow2003probability}). Therefore, under the first-stage threshold policy, the realized variable $S=s$ corresponds  to  
$s = \Phi\left(\frac{\tau - \theta}{\sigma}\right)$,
where $\Phi(\cdot)$ denotes the CDF of the standard Gaussian distribution.

Since at the end of the first stage, each agent who did not take the risky action observes the fraction of agents who acted in the first stage, and since, the standard normal CDF $\Phi(\cdot)$ is strictly increasing and continuous, it is invertible, allowing to recover the fundamental $\theta$ perfectly. Consequently, using this information, each such agent can compute 
\begin{equation}
\tau - \sigma \Phi^{-1}(s)=\theta.
\end{equation}
In the infinite-population regime, this implies that all agents who delay taking the risky action observe the true realization $\Theta = \theta$ prior to making their second-stage decision.

\vspace{5pt}

\begin{remark}
The exact recovery of the fundamental state by the remaining agents arises from signaling through infinitely many independent binary channels. While each individual agent's action is based on a noisy private signal, aggregating these actions over an infinite population conveys an unbounded amount of information\footnote{In the sense of Shannon's Information Theory \cite{cover2006elements}.} from the first stage to the second. As a result, the continuous parameter $\theta$ can be recovered perfectly. This mechanism does not arise in the finite-population setting.
\end{remark}

\vspace{5pt}

Because the fundamental is perfectly revealed, the second-stage policy for any agent who has not yet taken the risky action reduces to a deterministic mapping from $\theta$ to an action. Consequently, the initial noisy private signal $Y_i$ becomes irrelevant for the remaining decision. This is analogous to sequential social learning, where private signals can be disregarded without loss of optimality once a sufficient number of prior actions has been observed \cite{Chamley2004RationalHerdsDOI,acemoglu2011bayesian}. Henceforth, we denote the total fraction of the population taking the risky action across both stages by {$F(\theta)$}.

\subsection{Optimal second-stage policy}
In the second stage, the fundamental state $\Theta = \theta$ is perfectly revealed to all agents. The mass of agents who already committed to the risky action in the first stage is given by 
\begin{equation}
F^{(1)}(\theta) = \lim_{N \to \infty} \frac{1}{N} \sum_{i=1}^N A_i^{(1)} = \Phi \left( \frac{\tau - \theta}{\sigma} \right).
\end{equation}
The remaining agents, which have a total mass of $1 - F^{(1)} (\theta)$, now play a perfect-information coordination game.

If an agent takes the risky action in the second stage, their payoff is $\gamma\big(F(\theta) - \theta\big)$. Because the state is known, we can identify three distinct strategic regions based on $\theta$:

 \subsubsection{Upper Dominance Region ($\theta \geq 1$)} Even if all remaining agents take the risky action, the maximum possible aggregate action is $F(\theta) = 1$. The payoff would be $\gamma(1 - \theta) \leq 0$. The risky action is strictly dominated, so the remaining agents take the safe action: $A_i^{(2)} = 0$.
 
     \subsubsection{Lower Dominance Region \big($\theta < F^{(1)}(\theta)=s$\big)} The payoff is strictly positive even if no additional agents take the risky action, because the first-stage alone guarantees $F^{(1)}(\theta) - \theta > 0$. The risky action is strictly dominant, and all remaining agents take the risky action: $A_i^{(2)} = 1$.
     \subsubsection{Multiplicity Region \big($s=F^{(1)}(\theta) \leq \theta < 1$\big)} The remaining agents play a pure coordination game. If they all take the risky action, the total active population becomes $F(\theta) = F^{(1)}(\theta) + F^{(2)}(\theta) = 1$, yielding a positive payoff $\gamma\big(1 - \theta\big) > 0$. If all other agents take the safe action, the total active population remains $F^{(1)}(\theta)$, yielding a negative payoff $\gamma(F^{(1)}\big(\theta) - \theta\big) \leq 0$. We assume agents coordinate on the Pareto-dominant equilibrium\footnote{An equilibrium is considered Pareto dominant if there exists no other equilibrium profile that yields a strictly higher expected payoff for at least one agent without decreasing the payoff of any other agent.} where the coordination succeeds. Thus, all remaining agents take the risky action: $A_i^{(2)} = 1$.

Combining these regions, the optimal second-stage policy for the agents who chose delay is an indicator function of $s$
\begin{align}
\textstyle {\pi_i^\star}^{(2)}(s)
=\indic{s\geq\Phi\!\left(\frac{\tau-1}{\sigma}\right)}, \ \ i\in\ags.
\end{align}
As a result of using ${\pi_i^\star}^{(2)}(s)$, the total \textit{ex-post} aggregate action becomes
\begin{align}
\textstyle F(\theta) = 
\begin{cases} 
1, & \text{if } \theta \leq 1 \\ 
\Phi\left(\frac{\tau - \theta}{\sigma}\right), & \text{if } \theta > 1.
\end{cases}
\end{align}
Note that this is consistent with Proposition~\ref{prop:infinite_population}.

\subsection{Equilibrium condition}
We now use backward induction to evaluate the optimal first-stage decision. An agent~$i$ observing a private signal $Y_i = y$ must weigh the expected utility of taking the risky action in the first stage against the expected utility of delaying the decision. Define the function
\begin{align}\label{eq:indif}
\Delta(y) \Equaldef \mathbb{E}\Big[(1 - \gamma \indic{\Theta < 1}\big) \big(F(\Theta) - \Theta\big) \mid Y_i = y\Big].
\end{align} 
Note that this expression is the utility difference between taking the risky action in the first stage and the second stage, given the information available at the first stage. As discussed in Proposition~\ref{prop:infinite_population}, for the proposed threshold $\tau^\star$ to constitute a BNE policy profile, an agent must be \textit{indifferent} between taking the risky action in the first stage and delaying the decision to the second stage. This yields the \textit{indifference condition} $\Delta(\tau^\star)=0$.

To establish the existence of an equilibrium threshold policy, we proceed as 
follows. First, we notice that $\Delta(\tau) > 0$ as $\tau \to -\infty$ and 
$\Delta(\tau) < 0$ as $\tau \to +\infty$. By continuity of $\Delta(\tau)$, there 
must exist at least one value $\tau^\star$ such that $\Delta(\tau^\star) = 0$. 
However, uniqueness cannot be guaranteed unless we impose a condition on the noise 
variance.

\vspace{5pt}

\begin{theorem} \label{theorem:unique_BNE}
In the infinite population regime, suppose that noise variance satisfies ${\sigma^2 < 2\pi}$, then $\Delta(\tau)$ in~\eqref{eq:indif} is strictly decreasing function of $\tau$. As a result, there exists a unique threshold $\tau^\star$ 
that characterizes the BNE, i.e., $\Delta(\tau^\star) = 0$.
\end{theorem}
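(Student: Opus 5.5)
The plan is to compute $\Delta'(\tau)$ explicitly, evaluating \eqref{eq:indif} at the threshold signal $y=\tau$. In that case $\tau$ enters twice: through the posterior of $\Theta$, and through the second-stage aggregate $F$ (via $F^{(1)}$). I will then show that the single positive contribution, which comes from $F$, is dominated by the negative ones exactly when $\sigma^2<2\pi$.

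\textbf{Step 1 (posterior).} With $\Theta\sim\mathcal N(0,1)$ and $Y_i=\Theta+Z_i$, we have $\Theta\mid Y_i=\tau \sim \mathcal N(c\tau, v^2)$, where $c=1/(1+\sigma^2)$ and $v^2=\sigma^2/(1+\sigma^2)$. Write $\Theta=c\tau+vW$ with $W\sim\mathcal N(0,1)$ independent of $\tau$. Using the expression for $F$ from the infinite-population section, $\Delta(\tau)=\mathbb E[h(c\tau+vW,\tau)]$, where
\begin{align*}
h(\theta,\tau)=\begin{cases}(1-\gamma)(1-\theta), & \theta\le 1,\\ \Phi\!\left(\frac{\tau-\theta}{\sigma}\right)-\theta, & \theta>1.\end{cases}
\end{align*}
Note that $h$ has a downward jump at $\theta=1$ of size $1-\Phi((\tau-1)/\sigma)>0$.

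\textbf{Step 2 (differentiate).} I will write $\Delta(\tau)=\int h(\theta,\tau)\,p_\tau(\theta)\,d\theta$, with $p_\tau$ the Gaussian posterior density. Alternatively, I can split the integral at $\theta=1$ and differentiate each piece, including its boundary term. Differentiation under the integral is justified by dominated convergence, since the integrands have Gaussian tails and bounded $\tau$-derivatives locally. This gives
\begin{align*}
\Delta'(\tau)= &-c(1-\gamma)\,\mathbb P(\Theta\le 1\mid \tau) - c\,\mathbb P(\Theta>1\mid\tau) \\
&- c\Big(1-\Phi\!\left(\tfrac{\tau-1}{\sigma}\right)\Big)p_\tau(1) + (1-c)\,\mathbb E\!\left[\tfrac{1}{\sigma}\phi\!\left(\tfrac{\tau-\Theta}{\sigma}\right)\indic{\Theta>1}\,\Big|\,\tau\right].
\end{align*}
Each term has an identifiable source:
\begin{itemize}
\item the slope $-(1-\gamma)$ of $h$ on $\theta\le 1$, multiplied by $c$;
\item the slope $-1$ of $-\theta$ on $\theta>1$, multiplied by $c$;
\item the jump at $\theta=1$, with the boundary moving at rate $c$;
\item the $\Phi$ term, whose total derivative is $(1-c)\phi/\sigma$, because $\tau$ enters both directly and through $\Theta$.
\end{itemize}
I expect this bookkeeping to be the main obstacle. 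The key cancellation producing the factor $1-c$ is easy to miss, so I will double-check it by using the substitution $\Theta=c\tau+vW$ on the region $\{W>(1-c\tau)/v\}$.

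\textbf{Step 3 (sign).} The first three terms are nonpositive. The second is strictly negative, since $\mathbb P(\Theta>1\mid\tau)>0$. Using $\phi\le 1/\sqrt{2\pi}$ and $1-c=\sigma^2/(1+\sigma^2)$, the last term is at most
\begin{align*}
\frac{\sigma}{(1+\sigma^2)\sqrt{2\pi}}\,\mathbb P(\Theta>1\mid\tau).
\end{align*}
Therefore
\begin{align*}
\Delta'(\tau)<\frac{\mathbb P(\Theta>1\mid\tau)}{1+\sigma^2}\left(\frac{\sigma}{\sqrt{2\pi}}-1\right)\le 0
\end{align*}
whenever $\sigma<\sqrt{2\pi}$, i.e.\ $\sigma^2<2\pi$. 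Hence $\Delta$ is strictly decreasing.

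\textbf{Step 4 (uniqueness).} As noted before the theorem, $\Delta$ is continuous, with $\Delta(\tau)>0$ as $\tau\to-\infty$ and $\Delta(\tau)<0$ as $\tau\to+\infty$. I will briefly verify these limits with the same representation: the posterior mass concentrates on $\theta\to\mp\infty$, so the dominant term is $\mp\infty$ in sign. The intermediate value theorem gives a root. Strict monotonicity makes it unique, so $\tau^\star$ with $\Delta(\tau^\star)=0$ exists, is unique, and characterizes the BNE by Proposition~\ref{prop:infinite_population}.
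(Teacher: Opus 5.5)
Your proof is correct and follows the paper's route: write the posterior as $\Theta_\tau=\alpha\tau+\sigma\sqrt{\alpha}\,Z$ (your $c\tau+vW$), differentiate in $\tau$, and control the only positive contribution, $(1-\alpha)\tfrac{1}{\sigma}\phi\bigl(\tfrac{\tau-\Theta_\tau}{\sigma}\bigr)$ on $\{\Theta_\tau>1\}$, with $\phi\le 1/\sqrt{2\pi}$.

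The one real difference is in your favor. The paper differentiates $g(\tau,\Theta_\tau)$ pointwise for almost every $Z$ and asserts $\Delta'(\tau)=\mathbb{E}\bigl[\tfrac{d}{d\tau}g(\tau,\Theta_\tau)\bigr]$. That interchange is not valid as an equality, because $\tau\mapsto g(\tau,\Theta_\tau)$ jumps downward when $\Theta_\tau$ crosses $1$. Your Leibniz boundary term $-c\bigl(1-\Phi(\tfrac{\tau-1}{\sigma})\bigr)p_\tau(1)$ is exactly the contribution the paper omits. So your formula is the actual derivative, while the paper's displayed expression is only an upper bound on it; since the omitted term is negative, the paper's conclusion still holds.

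For Step 4 you do not need a separate tail argument, and ``the posterior concentrates'' is loose anyway, since the posterior variance stays fixed and only the mean shifts. Because $\mathbb{P}(\Theta\le 1\mid\tau)+\mathbb{P}(\Theta>1\mid\tau)=1$, your Step 3 bound (dropping the jump term) already gives $\Delta'(\tau)\le c\max\{-(1-\gamma),\,\sigma/\sqrt{2\pi}-1\}<0$ uniformly in $\tau$. This yields $\Delta(\tau)\to\pm\infty$ as $\tau\to\mp\infty$ directly, which is how the paper obtains the limits (its bound reuses the symbol $\gamma$ and drops the factor $\alpha$, but that is the intended estimate).
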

\vspace{5pt}

\begin{proof}
Per the discussion above, a (homogeneous) BNE for the infinite population game would be the pair ${\big(\indic{Y_i\leq \tau^\star},\indic{\Theta\leq 1}\big)}$ for the two stage actions, where the threshold $\tau^\star$ satisfies $\Delta(\tau^\star)=0$. Note that 
\begin{equation}
\alpha=\frac{1}{1+\sigma^2},
\qquad
\Theta\mid Y=\tau \sim \mathcal{N}(\alpha\tau,\alpha\sigma^2).
\end{equation}
In other words, $\Theta_\tau \Equaldef (\Theta\mid Y=\tau)$ can be written as
\begin{equation}
\Theta_\tau=\alpha\tau+\sigma\sqrt{\alpha}\,Z,
\qquad Z\sim\mathcal{N}(0,1).
\end{equation}
Note that $Z$'s distribution is independent of $\tau$. 

Define
\begin{equation}
g(\tau,\theta) \Equaldef
\begin{cases}
(1-\gamma)(1-\theta), & \theta\le 1,\\[4pt]
\Phi\!\left(\dfrac{\tau-\theta}{\sigma}\right)-\theta, & \theta>1.
\end{cases}
\end{equation}
Then, $\Delta(\tau)=\mathbb{E}\bigl[g(\tau,\Theta_\tau)\bigr].$ For almost every $Z$, differentiating with respect to $\tau$ gives
\begin{equation}
\frac{d}{d\tau}g(\tau,\Theta_\tau)= -\alpha(1-\gamma)
\end{equation}
for $\Theta_\tau<1$, and for $\Theta_\tau>1$,
\begin{equation}
\frac{d}{d\tau}g(\tau,\Theta_\tau)=
\frac{1}{\sigma}\phi\!\left(\frac{\tau-\Theta_\tau}{\sigma}\right)
+\alpha\left(-1-\frac{1}{\sigma}\phi\bigg(\frac{\tau-\Theta_\tau}{\sigma}\right)\bigg).
\end{equation}
Since $1-\alpha=\alpha\sigma^2$, this simplifies to
\begin{equation}
\frac{d}{d\tau}g(\tau,\Theta_\tau)=
\begin{cases}
-\alpha(1-\gamma), & \Theta_\tau<1,\\[6pt]
\alpha\!\left(\sigma\phi\!\left(\dfrac{\tau-\Theta_\tau}{\sigma}\right)-1\right), & \Theta_\tau>1.
\end{cases}
\end{equation}
Note that $-\alpha(1-\gamma)<0$, and using $\phi(x)\le\phi(0)=1/\sqrt{2\pi}$, we obtain
\begin{equation}
\alpha\left(\frac{\sigma}{\sqrt{2\pi}}-1\right) < 0
\ \ \text{whenever } \sigma<\sqrt{2\pi}.
\end{equation}
Hence, letting $\gamma=\max\!\left(-(1-\gamma),\,\frac{\sigma}{\sqrt{2\pi}}-1\right)<0$, we have
\begin{equation}
\Delta'(\tau)=\mathbb{E}\!\left[\frac{d}{d\tau}g(\tau,\Theta_\tau)\right]\le\gamma<0.
\end{equation}
Since $\Delta'(\tau)\le\gamma$, it follows that $\Delta(\tau)\to-\infty$ as $\tau\to+\infty$ and $\Delta(\tau)\to+\infty$ as $\tau\to-\infty$. Hence, by continuity of $\Delta(\tau)$, it has a root $\tau=\tau^\star$, which is unique by strict monotonicity.
\end{proof}

\vspace{5pt}

One of the implications of the above result is that the optimal threshold $\tau^\star$ is a decreasing function of the discount factor $\gamma$, which is a natural condition to appear: when $\gamma$ is small, the agents have more incentive to act on the first stage. 

\vspace{5pt}

\begin{proposition}
If $\sigma^2<2\pi$, the equilibrium threshold $\tau^\star$ for taking a risky action on the 
first stage, is strictly decreasing in the discount factor 
$\gamma$. Therefore, an increase in the discount factor strictly reduces the 
fraction of agents taking the risky action on the first-stage for any given realization of the 
fundamental $\Theta=\theta$.
\end{proposition}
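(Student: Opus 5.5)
We argue by the implicit function theorem applied to the indifference condition $\Delta(\tau^\star;\gamma)=0$ from Theorem~\ref{theorem:unique_BNE}, now regarding $\Delta$ as a function of both $\tau$ and $\gamma$. Write, as in that proof, $\Theta_\tau=\alpha\tau+\sigma\sqrt{\alpha}\,Z$ with $Z\sim\mathcal N(0,1)$, so that
\begin{equation*}
\Delta(\tau;\gamma)=\mathbb E\big[g(\tau,\Theta_\tau;\gamma)\big],
\end{equation*}
where $g$ is defined there. The plan has three steps. First, compute the partial derivative in $\gamma$. Second, reuse the bound on the partial derivative in $\tau$. Third, conclude via the sign of $d\tau^\star/d\gamma=-\partial_\gamma\Delta/\partial_\tau\Delta$. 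A version that avoids differentiability in $\gamma$ altogether is sketched at the end.

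For the $\gamma$-derivative, only the branch $\theta\le 1$ of $g$ depends on $\gamma$. Hence
\begin{equation*}
\partial_\gamma \Delta(\tau;\gamma)=-\mathbb E\big[(1-\Theta_\tau)\,\indic{\Theta_\tau\le 1}\big].
\end{equation*}
This is strictly negative for every $\tau$. The integrand is nonnegative, and it is strictly positive on the event $\{\Theta_\tau<1\}$, which has positive probability because $\Theta_\tau$ is Gaussian with variance $\alpha\sigma^2>0$. Differentiation under the expectation is justified because $g$ is affine in $\gamma$ and $\Theta_\tau$ has finite first moment.

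For the $\tau$-derivative, the proof of Theorem~\ref{theorem:unique_BNE} gives
\begin{equation*}
\partial_\tau\Delta(\tau;\gamma)\le \alpha\max\Big\{-(1-\gamma),\ \tfrac{\sigma}{\sqrt{2\pi}}-1\Big\}<0
\end{equation*}
whenever $\sigma^2<2\pi$, uniformly in $\tau$. This bound must be justified with some care. The function $g(\cdot,\Theta_\tau)$ has a kink at $\Theta_\tau=1$, and $g$ is discontinuous there: $(1-\gamma)(1-\theta)\to 0$ while $\Phi((\tau-1)/\sigma)-1<0$. Differentiating pathwise therefore misses a boundary term. I would redo the derivative by writing
\begin{equation*}
\Delta(\tau)=\int g(\tau,\theta)\,\frac{1}{\sqrt{\alpha}\sigma}\,\phi\!\left(\frac{\theta-\alpha\tau}{\sqrt{\alpha}\sigma}\right)d\theta,
\end{equation*}
that is, by differentiating the density rather than the integrand. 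The jump at $\theta=1$ then contributes no separate boundary term.

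This step is the main obstacle. If the cleaner route fails, I fall back on taking the monotonicity statement of Theorem~\ref{theorem:unique_BNE} as given, namely that $\Delta(\cdot;\gamma)$ is strictly decreasing. Combined with the next paragraph, monotonicity alone suffices and no derivative in $\tau$ is needed.

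The conclusion then goes as follows. Take $\gamma_1<\gamma_2$ with corresponding roots $\tau_1^\star$ and $\tau_2^\star$. Because $\partial_\gamma\Delta<0$ pointwise in $\tau$,
\begin{equation*}
\Delta(\tau_1^\star;\gamma_2)<\Delta(\tau_1^\star;\gamma_1)=0=\Delta(\tau_2^\star;\gamma_2).
\end{equation*}
Since $\Delta(\cdot;\gamma_2)$ is strictly decreasing by Theorem~\ref{theorem:unique_BNE}, this forces $\tau_2^\star<\tau_1^\star$. The threshold $\tau^\star$ is therefore strictly decreasing in $\gamma$, and this monotone comparison needs no implicit function theorem.

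Finally, for fixed $\theta$ the first-stage mass is $F^{(1)}(\theta)=\Phi\big((\tau^\star-\theta)/\sigma\big)$. Because $\Phi$ is strictly increasing, this mass strictly decreases as $\gamma$ increases, which gives the second claim.
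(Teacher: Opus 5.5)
Your proof is correct. It rests on the same two facts as the paper's: the identity $\partial_\gamma\Delta=-\mathbb{E}[(1-\Theta_\tau)\indic{\Theta_\tau\le 1}]<0$, and the strict decrease of $\Delta(\cdot;\gamma)$ from Theorem~\ref{theorem:unique_BNE}. Where you differ is the closing step: you use a monotone comparison instead of the implicit function theorem.

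The paper writes $d\tau^\star/d\gamma=-\partial_\gamma\Delta/\partial_\tau\Delta$. That requires $\Delta$ to be $C^1$ with $\partial_\tau\Delta<0$ at the root, and it buys an explicit rate that the proposition never uses. Your comparison $\Delta(\tau_1^\star;\gamma_2)<0=\Delta(\tau_2^\star;\gamma_2)$ needs neither differentiability of $\tau^\star(\gamma)$ nor any $\tau$-derivative. Since $\Delta$ is affine in $\gamma$, it needs no $\gamma$-derivative either: $\Delta(\tau;\gamma_2)-\Delta(\tau;\gamma_1)=-(\gamma_2-\gamma_1)\,\mathbb{E}[(1-\Theta_\tau)\indic{\Theta_\tau\le 1}]$. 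So your end version avoids differentiability of $\tau^\star$, not ``differentiability in $\gamma$'' as you wrote.

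Your worry about the jump of $g$ at $\theta=1$ is well founded: the paper's pathwise differentiation drops a boundary term. The dropped term has a favorable sign, however, so your fallback on Theorem~\ref{theorem:unique_BNE} is safe.
\begin{itemize}
\item For fixed $Z$, $\Theta_\tau=\alpha\tau+\sigma\sqrt{\alpha}Z$ increases in $\tau$ and crosses $1$ exactly once, upward.
\item At that crossing, $g$ jumps from $0$ down to $\Phi((\tau-1)/\sigma)-1<0$.
\item Away from the crossing, the slopes are $-\alpha(1-\gamma)$ and $\alpha(\sigma\phi(\cdot)-1)$, both negative when $\sigma^2<2\pi$.
\end{itemize}
Hence $\tau\mapsto g(\tau,\Theta_\tau)$ is strictly decreasing for every $Z$. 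Taking expectations gives $\Delta(\tau_2)-\Delta(\tau_1)\le\alpha c\,(\tau_2-\tau_1)$ with $c=\max\{-(1-\gamma),\sigma/\sqrt{2\pi}-1\}$, with no derivative needed.

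Your proposed fix of differentiating the density does not by itself produce the bound you quote. Write $p_\tau$ for the posterior density of $\Theta_\tau$. To compare $\int g\,\partial_\tau p_\tau\,d\theta$ with the pathwise expression, you must integrate by parts using $\partial_\tau p_\tau=-\alpha\,\partial_\theta p_\tau$. The jump then reappears as the extra term $\alpha\,p_\tau(1)\bigl(\Phi(\frac{\tau-1}{\sigma})-1\bigr)<0$. It is harmless, but it is not absent.
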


\vspace{5pt}

\begin{proof}
The equilibrium threshold $\tau^\star$ is uniquely defined by the indifference 
condition $\Delta(\tau^\star; \gamma, \sigma) = 0$. By the Implicit Function 
Theorem, the derivative of the threshold with respect to $\gamma$ is given by
\begin{equation} \label{eq:IFT}
\frac{d\tau^\star}{d\gamma} = - \frac{\frac{\partial \Delta}{\partial \gamma}}
{\frac{\partial \Delta}{\partial \tau}}.
\end{equation}
From the monotonicity established in the proof of Theorem~\ref{theorem:unique_BNE}, we know that if 
$\sigma^2<2\pi$, then $\frac{\partial \Delta}{\partial \tau} (\tau)<0$. It remains to analyze the partial 
derivative of $\Delta$ with respect to $\gamma$. Notice that $\gamma$ only 
affects the agent's payoff when $\theta < 1$, where the net payoff of taking a 
risky action in the first stage versus delaying is $(1 - \gamma)(1 - \theta)$. 
Differentiating $\Delta$ with respect to $\gamma$ gives
\begin{equation} 
\frac{\partial \Delta}{\partial \gamma} = -\int_{-\infty}^{1} (1 - \theta) 
f(\theta \mid Y=\tau^\star) \, d\theta < 0,
\end{equation}
where the inequality holds since $\theta < 1$. From 
\eqref{eq:IFT}, we obtain $\frac{d\tau^\star}{d\gamma} < 0$. Since the fraction of agents taking the risky action on the first-stage  given $\Theta = \theta$ is $\mathbb{P}(Y \leq \tau^\star 
\mid \Theta = \theta) = \Phi\left(\frac{\tau^\star - \theta}{\sigma}\right)$, 
which is strictly increasing in $\tau^\star$, an increase in $\gamma$ strictly 
reduces this probability for any given $\theta$.
\end{proof}

\section{Coordination Efficiency}

To compare the efficiency of the system under two stages vs. single stage homogeneous threshold policies, we consider a metric based on the aggregate expected payoff (i.e., welfare) of all agents. The larger this collective payoff, the more efficient the agents are in terms of coordinating their actions towards performing a task with difficulty $\theta$.

Recall that under a first-stage threshold policy with parameter $\tau$, the first-stage adoption rate as a function of the fundamental $\theta$ is given by
$F^{(1)}(\theta) = \Phi\!\left(\frac{\tau - \theta}{\sigma}\right)$.
Under Pareto-dominant equilibrium selection in the second stage, where agents observe $\theta$ perfectly through the public signal $s$, all remaining agents take the risky action if and only if $\theta \leq 1$. This yields a second-stage adoption rate of
$\bigl(1 - F^{(1)}(\theta)\bigr)\,\mathbb{1}(\theta \leq 1)$. 

To quantify the efficiency of coordination under $\tau$, we define the welfare as
\begin{multline}
W_{\text{two-stage}}(\tau, \sigma, \gamma)  
\Equaldef \int_{1}^{\infty} \underbrace{F^{(1)}(\theta)\bigl(F^{(1)}(\theta) - \theta\bigr)}_{\text{aggregate payoff when $\theta>1$}}\,\phi(\theta)\,d\theta\\ +\int_{-\infty}^{1} \underbrace{\Big[F^{(1)}(\theta) + \gamma\big(1 - F^{(1)}(\theta)\big)\Big](1 - \theta)}_{\text{aggregate payoff when $\theta\leq1$ }}\,\phi(\theta)\,d\theta,
\end{multline}
where $\phi$ denotes the standard Gaussian density. Figure~\ref{fig:welfare} illustrates the welfare of the two-stage global game as a function of the threshold $\tau$.

\begin{figure}[t!]
    \centering
\includegraphics[width=0.8\linewidth]{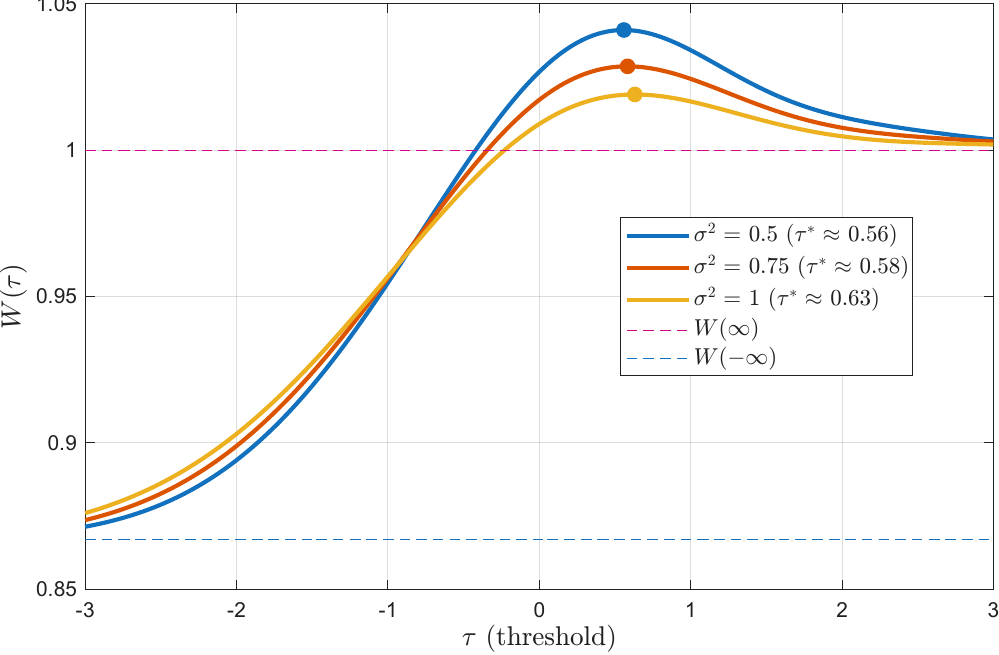}
    \caption{Per agent welfare for a two-stage global game with $\gamma=0.8$.}
    \label{fig:welfare}
\end{figure}

To assess whether the option to delay is beneficial or detrimental in equilibrium, we compare the welfare of the two-stage global game against that of the single-stage global game. The welfare for the single-stage game is given by
\begin{align}
    W_{\text{single-stage}}(\tau, \sigma) \Equaldef \int_{-\infty}^{\infty} F^{(1)} (\theta)
    \big(F^{(1)} \big(\theta) - \theta\big) \phi(\theta)\,d\theta.
\end{align}

For the single-stage global game, the indifference condition that determines the BNE threshold is given by \cite{Vasconcelos:2023}:
\begin{align}\label{eq:indif_single}
    \mathbb{E}\Big[ F^{(1)}(\Theta) - \Theta \mid Y_i = \tau^\star \Big] = 0.
\end{align}

Let $\tau_{\mathrm{single}}^\star(\sigma)$ denote the BNE threshold of the single-stage global game and $\tau^\star(\sigma,\gamma)$ the BNE threshold of the two-stage global game. The \emph{value of the second stage} is defined as
\begin{multline} \label{eq:second_stage_value}
    V(\sigma,\gamma) \Equaldef W_{\text{two-stage}}\bigl(\tau^\star(\sigma,\gamma),\, \sigma,\, \gamma\bigr) \\ - W_{\text{single-stage}}\bigl(\tau_{\text{single}}^\star(\sigma),\, \sigma\bigr),
\end{multline}
where each welfare is evaluated at the respective equilibrium threshold of its own game. When $V(\sigma,\gamma) > 0$, the two-stage game achieves more welfare than the single-stage game: the safety net provided by the noiseless second stage more than compensates for the strategic delay it induces. When $V(\sigma,\gamma) < 0$, the opposite holds: more agents tend to postpone their decisions to a costly second stage, and the resulting expected coordination loss exceeds the benefit of taking the risky action.

\begin{figure}[t!]
    \centering
    \includegraphics[width=0.85\linewidth]{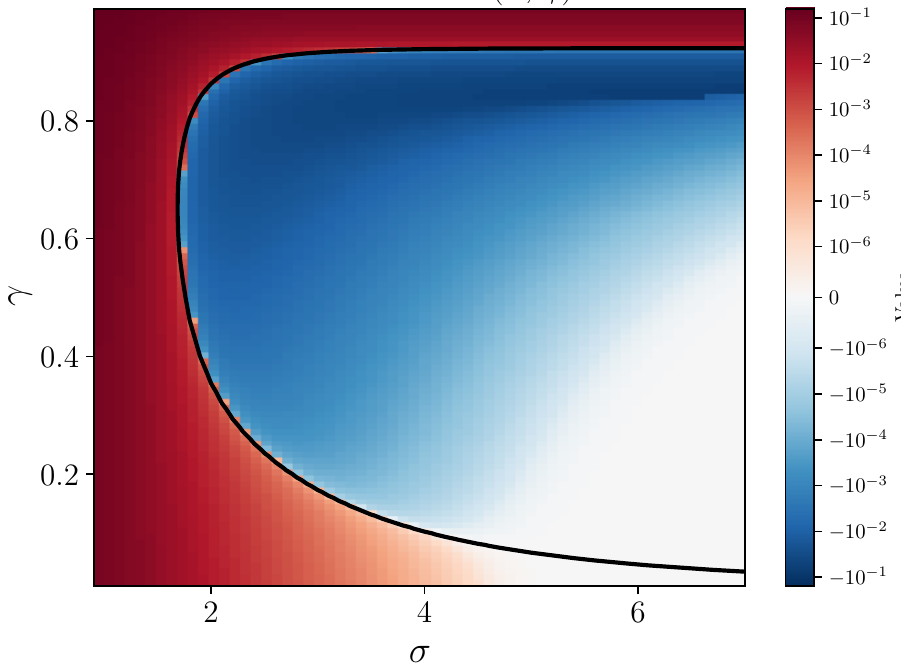}
    \caption{Illustration of the function $V(\sigma,\gamma)$ defined in \eqref{eq:second_stage_value}. The red region indicates the parameter regime $(\sigma,\gamma)$ in which the option to delay is beneficial, whereas the blue region corresponds to regimes in which it is detrimental.}
    \label{fig:second_stage_value}
\end{figure}

\subsection{When does the option to delay improve efficiency?} 
Figure~\ref{fig:second_stage_value} illustrates the regions in which the option to delay is beneficial. Interestingly, the option to delay can also be detrimental. The following theorem formalizes this observation.

\vspace{5pt}

\begin{theorem} \label{theorem:coordination_efficiency}
Consider a two-stage global game with infinitely many agents, noisy signals with variance $\sigma^2$ and discount factor $\gamma\in(0,1)$. The following hold \begin{enumerate}[(1)]    
    \item For any given noise level $\sigma > 0$, there exists $\bar{\gamma}_\sigma > 0$ such that welfare improves for $\gamma \in (0, \bar{\gamma}_\sigma)$.
    \item For any fixed discount factor $\gamma \in (0,1)$, there exists $\bar{\sigma}_\gamma > 0$ such that welfare improves for ${\sigma \in (0, \bar{\sigma}_\gamma)}$.
\end{enumerate}

\end{theorem}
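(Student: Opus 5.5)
The plan for both parts is to compare $W_{\text{two-stage}}$ and $W_{\text{single-stage}}$ through the indifference conditions \eqref{eq:indif} and \eqref{eq:indif_single}. Part (1) will follow from a perturbation argument at $\gamma=0$ together with continuity in $\gamma$. Part (2) will follow from computing both equilibrium welfares in the limit $\sigma\to 0$ and showing that the two limits are strictly ordered.

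\emph{Part (1).} First I would treat the case $\gamma=0$.

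Since $F(\theta)\ge F^{(1)}(\theta)$ pointwise, we have $\Delta_0(\tau)\ge \mathbb{E}[F^{(1)}(\Theta)-\Theta\mid Y_i=\tau]$. The right-hand side vanishes at $\tau=\tau^\star_{\text{single}}$, so $\Delta_0(\tau^\star_{\text{single}})\ge 0$. Because $\Delta_0$ is decreasing (Theorem~\ref{theorem:unique_BNE}; for general $\sigma$ I would select the largest root), this gives $\tau^\star(\sigma,0)\ge\tau^\star_{\text{single}}$.

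Next I would show that $W_{\text{two-stage}}(\cdot,\sigma,0)$ is nondecreasing on $[\tau^\star_{\text{single}},\tau^\star(\sigma,0)]$. Differentiating under the integral gives
\begin{equation*}
\partial_\tau W_{\text{two-stage}} = \tfrac1\sigma\!\int \phi\!\left(\tfrac{\tau-\theta}{\sigma}\right)\phi(\theta)\Big[(1-\theta)\indic{\theta\le1} + (2F^{(1)}-\theta)\indic{\theta>1}\Big]d\theta .
\end{equation*}
The density of $\Theta\mid Y_i=\tau$ is proportional to $\phi((\tau-\theta)/\sigma)\phi(\theta)$. Hence this derivative equals $c(\tau)\Delta_0(\tau)$ plus the nonnegative term $\frac1\sigma\int_{1}^{\infty}\phi(\cdot)\phi(\theta)F^{(1)}\,d\theta$, where $c(\tau)>0$. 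Since $\Delta_0(\tau)\ge 0$ for $\tau\le\tau^\star(\sigma,0)$, the derivative is nonnegative on that interval.

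Comparing the two welfares at a common threshold $\tau$ gives
\begin{equation*}
W_{\text{two-stage}}(\tau,\sigma,0)-W_{\text{single-stage}}(\tau,\sigma)=\int_{-\infty}^1 F^{(1)}(1-F^{(1)})\phi\,d\theta>0 .
\end{equation*}
Chaining these facts yields
\begin{equation*}
V(\sigma,0)\ge W_{\text{two-stage}}(\tau^\star_{\text{single}},\sigma,0)-W_{\text{single-stage}}(\tau^\star_{\text{single}},\sigma)>0 .
\end{equation*}
Finally, $\tau^\star(\sigma,\gamma)$ is continuous in $\gamma$ by the implicit function argument used in the preceding proposition, and $W_{\text{two-stage}}$ is continuous in $(\tau,\gamma)$ by dominated convergence. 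Therefore $V(\sigma,\gamma)>0$ on some interval $(0,\bar\gamma_\sigma)$.

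\emph{Part (2).} Here I would identify the limits of both thresholds as $\sigma\to0$.

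For the single-stage game, the standard Laplacian argument applies. Conditional on $Y_i=\tau$, the variable $F^{(1)}(\Theta)=\Phi((\tau-\Theta)/\sigma)$ becomes asymptotically uniform on $[0,1]$ while $\Theta\to\tau$. So \eqref{eq:indif_single} tends to $1/2-\tau=0$, giving $\tau^\star_{\text{single}}\to1/2$. By dominated convergence, $W_{\text{single-stage}}\to\int_{-\infty}^{1/2}(1-\theta)\phi\,d\theta$.

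For the two-stage game, the same reasoning gives the following limits of $\Delta(\tau)$:
\begin{itemize}
\item $(1-\gamma)(1-\tau)>0$ for fixed $\tau<1$;
\item $\tfrac12-\tau<0$ for fixed $\tau>1$.
\end{itemize}
Hence $\tau^\star(\sigma,\gamma)\to1$. Then $F^{(1)}(\theta)\to1$ for $\theta<1$ and $F^{(1)}(\theta)\to0$ for $\theta>1$. It follows that $W_{\text{two-stage}}\to\int_{-\infty}^{1}(1-\theta)\phi\,d\theta$, which strictly exceeds the single-stage limit. Continuity then gives $V(\sigma,\gamma)>0$ for $\sigma\in(0,\bar\sigma_\gamma)$.

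\emph{Main obstacle.} I expect the hardest step to be making the convergence $\tau^\star(\sigma,\gamma)\to1$ rigorous. Near $\tau\approx1$, the posterior straddles the kink of $g$ at $\theta=1$, and the limit of $\Delta$ there interpolates between the values $(1-\gamma)\cdot 0$ and $\tfrac12-1$. One must show that no root of $\Delta$ escapes the window $(1-\epsilon,1+\epsilon)$, uniformly as $\sigma\to0$. My first attempt would be to bound $\Delta(1\pm\epsilon)$ explicitly using the representation $\Theta_\tau=\alpha\tau+\sigma\sqrt\alpha Z$ from the proof of Theorem~\ref{theorem:unique_BNE}. For part (1), the delicate point is instead that I only need the sign of $\Delta_0$ on $[\tau^\star_{\text{single}},\tau^\star(\sigma,0)]$, which requires the largest-root selection when $\sigma^2\ge2\pi$.
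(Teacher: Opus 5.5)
Your proof is correct. Part (1) follows the paper's argument essentially step for step: $\Delta_0(\tau^\star_{\text{single}})>0$ from $F\ge F^{(1)}$, the identity $\partial_\tau W_{\text{two-stage}}=f(Y=\tau)\,\Delta(\tau)+f(Y=\tau)\int_1^\infty F^{(1)}(\theta)\,f(\theta\mid Y=\tau)\,d\theta$, the pointwise comparison of the two welfares at a common threshold, and continuity in $\gamma$.

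In Part (2) you share the paper's key step, $\tau^\star(\sigma,\gamma)\to 1$ and $\tau^\star_{\text{single}}(\sigma)\to\tfrac12$, but finish differently. The paper uses only the resulting ordering $\tau^\star(\sigma,\gamma)>\tau^\star_{\text{single}}(\sigma)$. It then reuses the monotonicity of $W_{\text{two-stage}}$ on $(-\infty,\tau^\star(\sigma,\gamma))$, which holds for every $\gamma$, together with $W_{\text{two-stage}}(\tau,\sigma,\gamma)\ge W_{\text{single-stage}}(\tau,\sigma)$. You instead pass to the limit in both equilibrium welfares by dominated convergence. Your route costs an extra convergence argument, but it yields the quantitative statement $\lim_{\sigma\to0}V(\sigma,\gamma)=\int_{1/2}^{1}(1-\theta)\phi(\theta)\,d\theta>0$, independent of $\gamma$. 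The paper's route is shorter and shows more generally that welfare improves whenever $\tau^\star(\sigma,\gamma)\ge\tau^\star_{\text{single}}(\sigma)$.

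Two side remarks.

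First, the obstacle you anticipate for $\tau^\star\to1$ does not arise. Once $\sigma<\sqrt{2\pi}$, $\Delta$ is strictly decreasing. The fixed-$\tau$ limits you already computed at $\tau=1\pm\epsilon$ therefore trap the unique root in $(1-\epsilon,1+\epsilon)$, and the behavior at the kink $\theta=1$ never needs to be analyzed. The same applies to $\tau^\star_{\text{single}}$ at $\tfrac12\pm\epsilon$.

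Second, selecting the largest root does not handle $\sigma^2\ge 2\pi$. If $\Delta_0$ has several roots, it can be negative on a subinterval of $[\tau^\star_{\text{single}},\tau^\star(\sigma,0)]$. The monotonicity of $W_{\text{two-stage}}$ on that interval then no longer follows, and continuity of the selected root in $\gamma$ would need a separate argument. Moreover, $\tau^\star_{\text{single}}$ and $V$ are not well defined without uniqueness. So, like the paper, which assumes uniqueness in a footnote, your argument effectively covers only the regime $\sigma^2<2\pi$.
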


\vspace{5pt}

\begin{proof}
The proof is in Appendix~\ref{proof:theorem:coordination_efficiency}.
\end{proof}

\vspace{5pt}

\begin{remark}
An important observation is that the equilibrium threshold $\tau^\star(\sigma, \gamma)$ does not, in general, maximize welfare in the two-stage game, i.e.,
\begin{equation}
W_{\text{two-stage}}\big(\tau^\star(\sigma, \gamma), \sigma, \gamma\big) 
< 
\max_{\tau \in \mathbb R} W_{\text{two-stage}}(\tau, \sigma, \gamma).
\end{equation}
As a result, it may occur that
\begin{multline}
\textstyle W_{\text{two-stage}}\big(\tau^\star(\sigma, \gamma), \sigma, \gamma\big) 
 < 
W_{\text{single-stage}}(\tau_{\text{single}}^\star (\sigma), \sigma) 
\\< 
\max_{\tau \in \mathbb R} W_{\text{two-stage}}(\tau, \sigma, \gamma),    
\end{multline}
which shows there is a gap between equilibrium and welfare-optimal coordination.
\end{remark}

\section{Conclusions and future work}

This work considers a sequential stochastic coordination game with imperfect observations among $N$ agents. Within the class of homogeneous threshold policies, we provided several preliminary results that showed that adding the option to delay taking a risky action and reducing an agent's payoff may be beneficial. When the system has a finite number of agents, we have shown that adding a second stage increases the probability an agent will take the risky action in the first stage in equilibrium. When the number of agents is asymptotically large, we show there exist regions of the $(\gamma,\sigma)$ parameter space where having a second stage is strictly better than not having one. Therefore, the feedback mechanism constitutes an important tool for improving coordination. Future work on this topic includes the characterization of information theoretic limits, learning algorithms, and the impact of social network structure on coordination. 

\bibliography{ref}
\bibliographystyle{ieeetr}

\appendix

\subsection{Proof of Theorem~\ref{theorem:coordination_improvement}}\label{proof_thm1}

Let $(\tau^\star, \lambda^\star)$ denote the BNE threshold policy for the two-stage global game, and let $\tau_{\text{single}}^\star$ denote the BNE threshold in the corresponding single-stage global game. Since $\tau^\star \le \max\{\tau^\star, \lambda^\star(s)\}$, it suffices to show that $\tau_{\text{single}}^\star \le \tau^\star$. In the single-stage global game, under the BNE policy $A_i^{(1)} = \indic{Y_i \le \tau_{\text{single}}^\star}$, agent $i$ chooses to take the risky action if and only if
\begin{equation}
\mathbb{E}\!\left[
\frac{1}{N}\left(\sum_{j \neq i} \indic{Y_j \le \tau_{\text{single}}^\star} + 1\right) - \Theta
\ \Bigg|\ Y_i = y_i
\right] \ge 0.
\end{equation}

Based on \eqref{eq:first_stage_average_utility}, we define the net gain from acting in the first stage of the two-stage game evaluated at the policy $(\tau_{\text{single}}^\star, \lambda^\star)$ as $\Delta^N(y_i)$ defined in \eqref{eq:net_gain}.
\begin{figure*}
\footnotesize
\begin{multline}\label{eq:net_gain}
\Delta^N(y_i) \Equaldef
\mathbb{E}\!\left[
\frac{1}{N}\left(\sum_{j \neq i} \indic{Y_j \le \max\{\tau_{\text{single}}^\star, \lambda^\star(S)\}} + 1\right)  - \Theta
\ \Bigg|\ Y_i = y_i
\right] 
\\- \gamma\,
\mathbb{E}\!\left[
\indic{Y_i \le \lambda^\star(S)}
\left(\frac{1}{N}\left(\sum_{j \neq i} \indic{Y_j \le \max\{\tau_{\text{single}}^\star, \lambda^\star(S)\}} + 1\right) - \Theta\right)
\ \Bigg|\ Y_i = y_i
\right].
\end{multline}
\end{figure*}
We now show that $\Delta^N(y_i) > 0$ for all $y_i \le \tau_{\text{single}}^\star$, by considering three cases.

\smallskip\noindent
\emph{(i) Interior region:} Suppose $y_i \in [\min_s \lambda^\star(s),\, \max_s \lambda^\star(s)]$. For sufficiently small $\gamma$, the inequality in \eqref{eq:small_gamma} holds.
\begin{figure*}
\footnotesize
\begin{multline}\label{eq:small_gamma}
\Delta^N(y_i)
\ge
\mathbb{E}\!\left[
\frac{1}{N}\left(\sum_{j \neq i} \indic{Y_j \le \max\{\tau_{\text{single}}^\star, \lambda^\star(S)\}} + 1\right) - \Theta
\ \Bigg|\ Y_i = y_i
\right]  \\
\quad - \gamma \sup_{y_i}\,
\mathbb{E}\!\left[
\frac{1}{N}\left(\sum_{j \neq i} \indic{Y_j \le \max\{\tau_{\text{single}}^\star, \lambda^\star(S)\}} + 1\right) - \Theta
\ \Bigg|\ Y_i = y_i
\right]  \\
>
\mathbb{E}\!\left[
\frac{1}{N}\left(\sum_{j \neq i} \indic{Y_j \le \tau_{\text{single}}^\star} + 1\right) - \Theta
\ \Bigg|\ Y_i = y_i
\right]
\ge 0.
\end{multline}
\end{figure*}

\smallskip\noindent
\emph{(ii) Lower region:} Suppose $y_i < \min_s \lambda^\star(s)$. Then $\indic{Y_i \le \lambda^\star(S)} = 1$ almost surely, so the $\gamma$ term vanishes from $\Delta^N$ and we obtain \eqref{eq:lower_region}.
\begin{figure*}
\footnotesize

\begin{multline}\label{eq:lower_region}
\Delta^N(y_i)
=
\mathbb{E}\!\left[
\frac{1}{N}\left(\sum_{j \neq i} \indic{Y_j \le \max\{\tau_{\text{single}}^\star, \lambda^\star(S)\}} + 1\right) - \Theta
\ \Bigg|\ Y_i = y_i
\right] 
>
\mathbb{E}\!\left[
\frac{1}{N}\left(\sum_{j \neq i} \indic{Y_j \le \tau_{\text{single}}^\star} + 1\right) - \Theta
\ \Bigg|\ Y_i = y_i
\right]
\ge 0.
\end{multline}
\end{figure*}

\smallskip\noindent
\emph{(iii) Upper region:} Suppose $y_i > \max_s \lambda^\star(s)$. Then $\indic{Y_i \le \lambda^\star(S)} = 0$ almost surely, so \eqref{eq:upper_region} holds.
\begin{figure*}
\footnotesize
\begin{multline} \label{eq:upper_region}
\Delta^N(y_i)
= 
(1 - \gamma)\,
\mathbb{E}\!\left[
\frac{1}{N}\left(\sum_{j \neq i} \indic{Y_j \le \max\{\tau_{\text{single}}^\star, \lambda^\star(S)\}} + 1\right) - \Theta
\ \Bigg|\ Y_i = y_i
\right]  
\\ >
(1 - \gamma)\,
\mathbb{E}\!\left[
\frac{1}{N}\left(\sum_{j \neq i} \indic{Y_j \le \tau_{\text{single}}^\star} + 1\right) - \Theta
\ \Bigg|\ Y_i = y_i
\right]
\ge 0.
\end{multline}
\end{figure*}

\smallskip
Therefore, $\Delta^N(y_i) > 0$ for all $y_i \le \tau_{\text{single}}^\star$. By Lemma~\ref{lemma:bne_policy}, the policy $(\tau_{\text{single}}^\star, \lambda^\star)$ cannot be a BNE, which implies that  $\tau_{\text{single}}^\star < \tau^\star$. Consequently, the option to delay increases the  first-stage equilibrium threshold, which implies  that more agents take the risky action in the first stage.\hfill\QED

\subsection{Proof of Proposition~\ref{prop:infinite_population}} \label{proof:prop:infinite_population}
To prove the proposition, we need the following lemma.
\begin{lemma} \label{lemma:first_stage_policy_limit}
Let $s^N=\frac{1}{N}\sum_{i=1}^N A_i^{(1)}$ and suppose that the first-stage policy satisfies
$A_i^{(1)}=\indic{y_i\le \tau}$.
Then, for any bounded continuous function $g$, it holds that
\begin{align}
    \textstyle \lim_{N\to\infty} \mathbb E[g(\Theta)\mid Y_i=y_i, S^N=s^N] = g(\theta),
\end{align}
where $\theta=\tau-\sigma\Phi^{-1}(s)$ and $s=\lim_{N\to\infty} s^N$.
\end{lemma}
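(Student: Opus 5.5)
The plan is to show that, conditionally on $Y_i=y_i$ and $S^N=s^N$, the posterior law of $\Theta$ concentrates at $\theta=\tau-\sigma\Phi^{-1}(s)$. The statement then follows because $g$ is bounded and continuous, so weak convergence of the posterior to $\delta_\theta$ is enough.

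\textbf{Step 1: write the posterior explicitly.}
Conditionally on $\Theta=\vartheta$ and $Y_i=y_i$, the other $N-1$ actions $\indic{Y_j\le\tau}$ are i.i.d.\ Bernoulli with parameter $p(\vartheta)=\Phi\big((\tau-\vartheta)/\sigma\big)$. Agent $i$'s own action is determined by $y_i$. Hence, with $k_N=Ns^N-\indic{y_i\le\tau}$,
\begin{equation*}
f(\vartheta\mid y_i,s^N)\;\propto\;\phi(\vartheta)\,\phi_\sigma(y_i-\vartheta)\,p(\vartheta)^{k_N}\big(1-p(\vartheta)\big)^{N-1-k_N},
\end{equation*}
where $\phi_\sigma$ is the $\mathcal N(0,\sigma^2)$ density. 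The change of variable $p=p(\vartheta)$ is a smooth, strictly decreasing bijection $\mathbb R\to(0,1)$. In the variable $p$, the likelihood factor is, up to constants, $\exp\{-(N-1)\,D(\hat s_N\|p)\}$, where $\hat s_N=k_N/(N-1)\to s$ and $D$ is the binary KL divergence. The prior factors $\phi(\vartheta)\phi_\sigma(y_i-\vartheta)$ are fixed, positive and continuous, and they do not depend on $N$.

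\textbf{Step 2: a Laplace-type concentration argument.}
I assume $s\in(0,1)$, which is the case of probability one. Fix $\epsilon>0$ and set $U_\epsilon=\{\vartheta:|\vartheta-\theta|<\epsilon\}$.

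On $U_\epsilon^c$, uniformly in large $N$, we have $D(\hat s_N\|p(\vartheta))\ge c_\epsilon>0$. This holds because $D(s\|\cdot)$ is continuous, vanishes only at $p=s$, and tends to $+\infty$ as $p\to0$ or $p\to1$; the convergence $\hat s_N\to s$ gives the uniformity.

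Near $\theta$, $D(\hat s_N\|p)\le C\big(|p-\hat s_N|^2+o(1)\big)$. So the mass of a shrinking neighbourhood of $\theta$ is at least of order $N^{-1/2}e^{-o(N)}$ relative to the normalizing constant. Since the priors are bounded on $U_\epsilon$, the numerator over $U_\epsilon^c$ is at most $e^{-(N-1)c_\epsilon}$ times the integrable prior mass. Therefore the posterior mass of $U_\epsilon^c$ tends to $0$.

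\textbf{Step 3: conclude.}
Split the expectation as
\begin{equation*}
\big|\mathbb E[g(\Theta)\mid\cdot]-g(\theta)\big|\le\sup_{U_\epsilon}|g-g(\theta)|+2\|g\|_\infty\,\mathbb P(U_\epsilon^c\mid\cdot).
\end{equation*}
Let $N\to\infty$ first, which kills the second term by Step 2. Then let $\epsilon\to0$ and use continuity of $g$ at $\theta$ to kill the first term.

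\textbf{Main obstacle.}
The hard part is the uniform lower bound on the normalizing constant in Step 2. The exponent's minimizer $\vartheta_N=\tau-\sigma\Phi^{-1}(\hat s_N)$ moves with $N$. I would handle this by restricting the denominator integral to $|\vartheta-\vartheta_N|\le N^{-1/2}$, where $D\le C/N$ by a second-order Taylor bound that is uniform for $\hat s_N$ in a compact subset of $(0,1)$. This gives a denominator of at least $c\,N^{-1/2}$, which beats the numerator's $e^{-cN}$ decay. The boundary cases $s\in\{0,1\}$ and the exact-threshold events are excluded as probability-zero, in line with the footnote to Proposition~\ref{prop:infinite_population}.
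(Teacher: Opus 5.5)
Your proposal is correct and follows the same route as the paper: Bayes' rule, writing the likelihood of $S^N$ as $\exp\{-N D(\cdot\,\|\,p(\vartheta))\}$, concentration of the posterior at the unique point where $p(\vartheta)=s$, and then weak convergence to $\delta_\theta$ tested against a bounded continuous $g$. Beyond the paper, you supply the details it skips: you remove agent $i$'s own action via $k_N$, prove a uniform KL gap off $U_\epsilon$, and lower-bound the normalizing constant by order $N^{-1/2}$. Your argument also works directly for a given sequence $s^N\to s$ without the law of large numbers the paper invokes, and it uses the correct sign $\theta=\tau-\sigma\Phi^{-1}(s)$, which the paper's last line flips.
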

\begin{proof}
Conditioned on $\Theta=\theta$, the first-stage action satisfies
$\textstyle \mathbb P(A_i^{(1)}=1\mid\Theta=\theta)
=
\Phi\!\left(\frac{\tau-\theta}{\sigma}\right)$.
Hence $S^N$ is the sample mean of i.i.d. Bernoulli random variables with parameter
$\textstyle p(\theta)=\Phi\!\left(\frac{\tau-\theta}{\sigma}\right)$.

By the law of large numbers, we have
$ S^N
\;\xrightarrow[]{a.s.}\;
p(\theta)
=
\Phi\!\left(\frac{\tau-\theta}{\sigma}\right)$.
Therefore, the observation of $S^N$ asymptotically reveals $p(\theta)$, and since $\Phi$ is strictly increasing,
it holds that $\theta = \tau-\sigma\Phi^{-1}(s)$.

Using Bayes' rule,
\begin{align}
    &\mathbb E [ g(\Theta) \,|\, Y_i = y_i, S^N = s^N ] \nonumber \\
    &= \int g(\theta') \, f (\theta' \,|\, Y_i = y_i, S^N \!=\! s^N) \, \mathrm d \theta' \nonumber \\
    &= \int g(\theta') \, \frac{f (S^N \!=\! s^N \,|\, \Theta = \theta') f (\theta' \,|\, Y_i = y_i)}{f (S^N = s^N \,|\, Y_i = y_i )} \, \mathrm d \theta'.
\end{align}

The likelihood of $S^N$ satisfies
\begin{equation}
f(S^N=s^N\mid\Theta=\theta')
\propto
\exp\!\left(-N D(s^N\|p(\theta'))\right),
\end{equation}
where $D(\cdot\|\cdot)$ denotes the Kullback--Leibler divergence.  
As $N\to\infty$, this likelihood concentrates at the unique point satisfying
$s=p(\theta')$.

Hence the posterior distribution of $\Theta$ converges weakly to the Dirac measure $\delta (\theta' - \theta)$,
where $\theta=\tau+\sigma\Phi^{-1}(s)$. Therefore,
\begin{multline}
\textstyle \lim_{N\to\infty}
\mathbb E[g(\Theta)\mid Y_i=y_i, S^N=s^N]
\\ \textstyle =
\int g(\theta')\,\delta (\theta'-\theta) \, \mathrm d\theta'
=
g(\theta).    
\end{multline}
\end{proof}

\textit{Proof of the proposition:}
We express
$s^N = \Phi\!\left(\frac{\tau-\theta}{\sigma}\right) + \delta^N$,
where $\delta^N = \frac{1}{N}\sum_{i=1}^{N} A_i^{(1)} - \Phi\!\left(\frac{\tau-\theta}{\sigma}\right)$.
Hence, we obtain 
$ \theta
=
\tau
-
\sigma
\Phi^{-1}(s^N-\delta^N)$.

\paragraph*{Second stage}

Consider the conditional expectation of the second-stage utility in a population of size $N$: Using conditional independence of $Y_j$ given $\Theta$, we obtain
\begin{align}
&\textstyle 
\mathbb E\!\Big[
\frac{1}{N}
\left(
\sum_{j\notin\mathcal S\cup\{i\}}
\indic{Y_j\le \max\{\tau, \lambda^N(S^N)\}}
+1
\right) \nonumber \\
&\qquad +S^N-\Theta
\ \Big|\ 
Y_i=y_i,\ S^N = s^N
\Big] \nonumber\\
&=
\textstyle (1 \!-\! s^N \!-\! \frac{1}{N}) 
\mathbb P\!\Big[ Y_j\le\max \{\tau, \lambda^N(S^N)\} \nonumber \\
&\textstyle \Big| Y_j \!>\! \tau, Y_i\!=\! y_i, S^N \!=\! s^N
\Big] \!+\! \frac{1}{N} \!+\! s^N
\!-\!
\mathbb E[\Theta | Y_i\!=\!y_i,S^N\!=\!s^N]\nonumber \\
&=
\textstyle (1 \!-\! s^N \!-\! \frac{1}{N}) \frac{\mathbb E\!\left[ \Phi\!\left(\frac{\max \{\tau, \lambda^N(s^N)\} - \Theta}{\sigma}\right) - \Phi\!\left(\frac{\tau - \Theta}{\sigma}\right) \Big| Y_i\!=\!y_i,S^N \!=\! s^N \right]}{\mathbb E\!\left[ 1 - \Phi\!\left(\frac{\tau - \Theta}{\sigma}\right) \Big| Y_i\!=\!y_i,S^N \!=\! s^N \right]}\nonumber \\
&\textstyle \qquad
+ \frac{1}{N} + s^N
-
\mathbb E[\Theta \,|\, Y_i=y_i, S^N=s^N],
\end{align}
where $\mathcal S^N$ denotes the set of agents who take action in the first stage,
$\mathcal S^N = \{\, i \in [N] \mid A_i^{(1)} = 1 \,\}$.

Let
$s=\lim_{N\to\infty}s^N$ and $\lambda(s)=\lim_{N\to\infty}\lambda^N(s^N)$. As $N$ tends to infinity, the law of large numbers implies $\delta^N\to0$. Consequently,
\begin{equation}
\theta
=
\tau-\sigma\Phi^{-1}(s).
\end{equation}

Furthermore, Lemma~\ref{lemma:first_stage_policy_limit} implies that
\begin{align}
&\textstyle (1 \!-\! s^N \!+\! \frac{1}{N}) 
\frac{\mathbb E\!\left[ \Phi\!\left(\frac{\max \{\tau, \lambda^N(S^N)\} - \Theta}{\sigma}\right) - \Phi\!\left(\frac{\tau - \Theta}{\sigma}\right) \Big| Y_i\!=\!y_i, S^N \!=\! s^N \right]}{\mathbb E\!\left[ 1 - \Phi\!\left(\frac{\tau - \Theta}{\sigma}\right) \Big| Y_i\!=\!y_i, S^N \!=\! s^N \right]} \nonumber \\
&\textstyle \qquad
+ \frac{1}{N} + s^N
-
\mathbb E[\Theta \,|\, Y_i=y_i, S^N=s^N] \nonumber
\\
&
\textstyle \xrightarrow[]{N \to \infty}
(1-s)
\frac{\Phi\!\left(\frac{\max \{\tau, \lambda(s)\} - \theta}{\sigma}\right) - \Phi\!\left(\frac{\tau - \theta}{\sigma}\right)}{1 - \Phi\!\left(\frac{\tau - \theta}{\sigma}\right)}
+
s
-
\theta .
\end{align}

The best response in the second stage therefore requires
\begin{equation}
\textstyle (1-s)\frac{\Phi\!\left(\frac{\max \{\tau, \lambda(s)\} - \theta}{\sigma}\right) - \Phi\!\left(\frac{\tau - \theta}{\sigma}\right)}{1 - \Phi\!\left(\frac{\tau - \theta}{\sigma}\right)}
+
s-\theta
\ge0 .
\end{equation}

Using $\theta=\tau-\sigma\Phi^{-1}(s)$, the above condition yields
\begin{equation}
\lambda(s)
=
\begin{cases}
\infty,
& \text{if }\theta\le1,\\[4pt]
-\infty,
& \text{otherwise}.
\end{cases}
\end{equation}

\paragraph*{First stage}

Consider the conditional expectation of the first-stage utility. Taking $N\to\infty$ and using the same argument as above, if $A_i^{(1)} = 1$, then
\begin{align*}
&\textstyle 
\mathbb E\!\Big[
\frac{1}{N}
\Big(
\sum_{j\neq i}
\indic{Y_j\le\max\{\tau, \lambda^N(S^N)\}}
+1
\Big)
-
\Theta
\ \Big|\ 
Y_i=y_i
\Big]
\\
&\textstyle = \mathbb E \left[ \frac{1}{N} \left( \sum_{j \neq i} \Phi \left( \frac{\max \{\tau, \lambda^N(S^N)\} - \Theta}{\sigma} \right) + 1 \right) - \Theta \,\Big|\, Y_i = y_i \right] \\
&\textstyle
\xrightarrow[]{N \to \infty}
\mathbb E\!\left[
\Phi\!\left(\frac{\max \{\tau, \lambda(S)\}-\Theta}{\sigma}\right)
-
\Theta
\ \Big|\ 
Y_i=y_i
\right].
\end{align*}

Otherwise, if $A_i^{(1)} = 0$, we obtain
\begin{align*}
&\textstyle \gamma \mathbb E \Big[ \indic{Y_i \!\leq\! \lambda^N(S^N)} \\
    &\textstyle \quad \times \left( \frac{1}{N} \left( \sum_{j \neq i} \Phi \left( \frac{\max \{\tau, \lambda^N(S^N)\} - \Theta}{\sigma} \right) \!+\! 1 \right) \!-\! \Theta \right) \,\Big|\, Y_i =\! y_i \Big] \\
&\textstyle
\xrightarrow[]{N \to \infty}
\gamma
\mathbb E\!\Big[
\underbrace{\indic{Y_i \!\leq\! \lambda(S)}}_{=\indic{\Theta \leq 1}}
\left(
\Phi\!\left(\frac{\max \{\tau, \lambda(S)\}-\Theta}{\sigma}\right)
\!-\!
\Theta
\right)
\!\Big| Y_i \!=\! y_i
\Big].
\end{align*}

This expression characterizes the limiting expected utility difference between acting in the first stage and delaying to the second stage. Therefore, the equilibrium first-stage threshold $\tau$ is determined by the indifference condition that the above expression equals zero at $Y_i=\tau$.
\hfill\QED

\subsection{Proof of Theorem~\ref{theorem:coordination_efficiency}} \label{proof:theorem:coordination_efficiency}

\textbf{Part~I.}
Recall the welfare and indifference functions for the two-stage global game:
\begin{multline}
W_{\text{two-stage}}(\tau, \sigma, \gamma)
\Equaldef \\
\int_{-\infty}^{1} \Big[F^{(1)}(\theta) + \gamma\big(1 - F^{(1)}(\theta)\big)\Big](1 - \theta)\,\phi(\theta)\,d\theta \\
+ \int_{1}^{\infty} F^{(1)}(\theta)\big(F^{(1)}(\theta) - \theta\big)\,\phi(\theta)\,d\theta,
\end{multline}
and
\begin{multline} \label{eq:two_stage_indifference_function}
\Delta_{\text{two-stage}}(\tau, \sigma, \gamma)
\Equaldef
\int_{-\infty}^{1} (1 - \gamma)(1 - \theta)\, f(\theta \mid Y=\tau)\,d\theta \\
+ \int_{1}^{\infty} \left(\Phi\!\left(\frac{\tau - \theta}{\sigma}\right) - \theta\right) f(\theta \mid Y=\tau)\,d\theta,
\end{multline}
where $F^{(1)}(\theta) = \Phi\!\left(\frac{\tau-\theta}{\sigma}\right)$. Similarly, for the single-stage global game,
\begin{equation}
W_{\text{single-stage}}(\tau, \sigma)
\Equaldef
\int_{-\infty}^{\infty} F^{(1)}(\theta)\big(F^{(1)}(\theta) - \theta\big)\,\phi(\theta)\,d\theta,
\end{equation}
and
\begin{multline} \label{eq:single_stage_indifference_function}
\Delta_{\text{single-stage}}(\tau, \sigma)
\Equaldef \\
\int_{-\infty}^{\infty} \left(\Phi\!\left(\frac{\tau - \theta}{\sigma}\right) - \theta\right) f(\theta \mid Y=\tau)\,d\theta.
\end{multline}

Let $\tau^\star(\sigma, \gamma)$ be the unique solution to $\Delta_{\text{two-stage}}(\tau^\star, \sigma, \gamma) = 0$, and let $\tau_{\text{single}}^\star(\sigma)$ be the unique solution to $\Delta_{\text{single-stage}}(\tau_{\text{single}}^\star, \sigma) = 0$\footnote{Uniqueness is assumed here; it follows from Theorem~\ref{theorem:unique_BNE} under $\sigma^2 < 2\pi$.}.

We first show that $\tau^\star(\sigma, 0) > \tau_{\text{single}}^\star(\sigma)$. Setting $\gamma = 0$ and using $\Delta_{\text{single-stage}}(\tau_{\text{single}}^\star, \sigma) = 0$, we compute
\begin{multline}
\Delta_{\text{two-stage}}(\tau_{\text{single}}^\star, \sigma, 0)
= \int_{-\infty}^{1} (1 - \theta)\, f(\theta \mid Y=\tau_{\text{single}}^\star)\,d\theta  \\
\quad + \int_{1}^{\infty} \left(\Phi\!\left(\frac{\tau_{\text{single}}^\star - \theta}{\sigma}\right) - \theta\right) f(\theta \mid Y=\tau_{\text{single}}^\star)\,d\theta  \\
= \int_{-\infty}^{1} \left(1 - \Phi\!\left(\frac{\tau_{\text{single}}^\star - \theta}{\sigma}\right)\right) f(\theta \mid Y=\tau_{\text{single}}^\star)\,d\theta
\;>\; 0.
\end{multline}
Since $\Delta_{\text{two-stage}}$ is strictly decreasing in $\tau$ (Theorem~\ref{theorem:unique_BNE}), this implies $\tau^\star(\sigma, 0) > \tau_{\text{single}}^\star(\sigma)$. We next show that $W_{\text{two-stage}}$ is strictly increasing in $\tau$ for $\tau < \tau^\star(\sigma, \gamma)$. Differentiating with respect to $\tau$ and using $\frac{1}{\sigma}\phi\!\left(\frac{\tau-\theta}{\sigma}\right)\phi(\theta) = f(\theta \mid Y=\tau)\,f(Y=\tau)$, we obtain
\begin{multline}
\frac{d W_{\text{two-stage}}}{d\tau}(\tau, \sigma, \gamma)
= f(Y=\tau)\,\Delta_{\text{two-stage}}(\tau, \sigma, \gamma) \\
+ f(Y=\tau)\int_{1}^{\infty} \Phi\!\left(\frac{\tau - \theta}{\sigma}\right) f(\theta \mid Y=\tau)\,d\theta.
\end{multline}
Since $\Delta_{\text{two-stage}}(\tau, \sigma, \gamma) > 0$ for $\tau < \tau^\star(\sigma, \gamma)$, and the second term is always non-negative, $W_{\text{two-stage}}$ is strictly increasing in $\tau$ on this interval. Therefore,
\begin{multline}
W_{\text{two-stage}}(\tau^\star(\sigma,0), \sigma, 0)
> W_{\text{two-stage}}(\tau_{\text{single}}^\star(\sigma), \sigma, 0)
\\ > W_{\text{single-stage}}(\tau_{\text{single}}^\star(\sigma), \sigma),
\end{multline}
where the second inequality follows because $W_{\text{two-stage}}(\tau, \sigma, 0) \ge W_{\text{single-stage}}(\tau, \sigma)$ for all $\tau$. 
 By continuity of $W_{\text{two-stage}}$ and $\tau^\star(\sigma,\gamma)$ in $\gamma$, the inequality
\begin{equation}
W_{\text{two-stage}}(\tau^\star(\sigma, \gamma), \sigma, \gamma)
>
W_{\text{single-stage}}(\tau_{\text{single}}^\star(\sigma), \sigma)
\end{equation}
persists for all $\gamma \in (0, \bar{\gamma}_\sigma)$, for some $\bar{\gamma}_\sigma > 0$.

\smallskip
\textbf{Part~II.}
As $\sigma \to 0$, the posterior $f(\theta \mid Y=\tau)$ concentrates at $\theta = \tau$. Evaluating the indifference conditions in this limit gives $\tau^\star(\sigma, \gamma) \to 1$ and $\tau_{\text{single}}^\star(\sigma) \to \frac{1}{2}$. Hence, for any fixed $\gamma \in (0,1)$, there exists $\bar{\sigma}_\gamma > 0$ such that $\tau^\star(\sigma, \gamma) > \tau_{\text{single}}^\star(\sigma)$ for all $\sigma \in (0, \bar{\sigma}_\gamma)$. The strict monotonicity of $W_{\text{two-stage}}$ established in Part~I then gives
\begin{multline}
W_{\text{two-stage}}(\tau^\star(\sigma, \gamma), \sigma, \gamma)
>
W_{\text{two-stage}}(\tau_{\text{single}}^\star(\sigma), \sigma, \gamma)
\\>
W_{\text{single-stage}}(\tau_{\text{single}}^\star(\sigma), \sigma)
\end{multline}
for all $\sigma \in (0, \bar{\sigma}_\gamma)$.\hfill\QED

\end{document}

%% file: symbols.tex
\usepackage{todonotes}

\newcommand{\ags}{[N]}

\newcommand{\indic}[1]{\mathbb{1}(#1)}